\newtheorem{theorem}{Theorem}
\newtheorem{lemma}{Lemma}
\theoremstyle{definition}
\DeclareFontFamily{U}{mathx}{}
\DeclareFontShape{U}{mathx}{m}{n}{ <-> mathx10 }{}
\DeclareSymbolFont{mathx}{U}{mathx}{m}{n}
\DeclareMathAccent{\widecheck}{0}{mathx}{"71}
\newlength{\LPlhbox}
\newcommand{\mFA}{\text{F1}}
\newcommand{\mFB}{\text{F2}}
\newcommand{\mFC}{\text{F3}}
\title{Scheduling on identical machines with conflicts to minimize the mean flow time}
\author{Nour ElHouda Tellache}
\address{N.E.H. Tellache, Decision Support \& Operations Research Group, Department of Informatics, University of Fribourg, Fribourg, Switzerland}
\email{nourelhouda.tellache@unifr.ch}
\author{Lydia Aoudia}
\address{L. Aoudia, RECITS Laboratory, USTHB university, BP 32 El-Alia, Bab-Ezzouar, Algiers, Algeria}
\email{aoudialydia96@gmail.com}
\author{Mourad Boudhar}
\address{M. Boudhar, RECITS Laboratory, USTHB university, BP 32 El-Alia, Bab-Ezzouar, Algiers, Algeria}
\email{mboudhar@yahoo.fr}
\keywords{Identical parallel machines; conflict graph; mean flow time; complexity theory; lower bounds; linear programming; genetic algorithms}
\begin{document}
	
	\maketitle
	
	\begin{abstract}
This paper addresses the problem of scheduling jobs on identical machines with conflict constraints, where certain jobs cannot be scheduled simultaneously on different machines. We focus on the case where conflicts can be represented by a simple undirected graph, and the objective is to minimize the mean flow time. We show that the problem is NP-hard even on two machines and two distinct processing times. For unit-time jobs, the problem becomes NP-hard when the number of machines increases to three. We also identify polynomial-time solvable cases for specific classes of conflict graphs. For the general problem, we propose mathematical models, lower bounds, and a genetic algorithm. We evaluate their performance through computational experiments on a wide range of instances derived from well-known benchmark instances in the literature.
	\end{abstract}

\section{Introduction}
We consider the following scheduling problem. Let \( J = \{J_1, \ldots, J_n\} \) be a set of \( n \) jobs, where each job \( J_j \in J \) has a processing time \( p_j \in \mathbb{N}_0 \). The jobs are to be scheduled on a set \( M = \{M_1, \ldots, M_m\} \) of \( m \) identical machines, where each machine can process at most one job at a time. Jobs are subject to conflict constraints given by a simple, undirected conflict graph \( G = (V, E) \), where each vertex \( j \in V \) corresponds to a job \( J_j \). An edge \( \{j, \ell\} \in E \) indicates that jobs \( J_j \) and \( J_\ell \) are in conflict and cannot be processed simultaneously on different machines at any point in time. The objective is to find a feasible schedule that minimizes the mean flow time. Let \( C_j \) denote the completion time of job \( J_j \). In the three-field notation of \citet{GLLR79}, we denote our problem as \( Pm\,|\,\text{ConfG} = (V, E)\,|\,\sum C_j \), where \( \text{ConfG} = (V, E) \) indicates the presence of a conflict graph \( G = (V, E) \) over the jobs. The complement of the conflict graph \( G \) is called the agreement graph, denoted by \( \overline{G} = (V, \overline{E}) \). It is straightforward to see that scheduling with a conflict graph and scheduling with an agreement graph are polynomially equivalent.

Such conflicts arise naturally in various practical scenarios. A common example occurs when jobs require additional resources for processing, but those resources are available only in limited quantities. If the combined demands of certain jobs exceed the availability of at least one resource, the jobs cannot be processed simultaneously on different machines and are therefore in conflict~\cite{bendraouche2015scheduling}. Similar constraints may arise in workforce planning during resource assignment~\cite{gardi2009mutual}, and in parallel computing when balancing computational load~\cite{baker1996mutual}. Further applications can be found in~\cite{bendraouche2012scheduling}.

Scheduling with conflicts has been studied for parallel machines--specifically in the case of identical machines--as well as for dedicated machines, primarily with the objective of minimizing the maximum completion time ($C_{max}$). 

For identical machines, the problem was first introduced by~\citet{baker1996mutual} for unit-time jobs, under the name \emph{Mutual Exclusion Scheduling} (MES).This problem corresponds to finding a minimum coloring of the conflict graph \( G \), where each color appears at most \( m \) times. The MES problem is NP-hard since the graph coloring problem is NP-hard~\cite{karp2009reducibility}. For arbitrary processing times, the problem remains NP-hard even on two machines without conflicts (i.e., when the conflict graph is edgeless), as it reduces to \( P2||C_{\max} \)~\cite{GJ79}. Subsequently, much research has focused on the complexity of the problem for specific classes of conflict graphs and restricted processing times. In particular, \citet{bendraouche2012scheduling} showed that for processing times \( p_j \in \{1,2,3\} \), the two-machine problem is strongly NP-hard when the conflict graph is a complement of a bipartite graph. Later, the same authors definitively closed the complexity status in~\cite{bendraouche2015scheduling} for the case of two machines with two fixed processing times. Specifically, they proved that for job sizes in \( \{a, 2a + b\} \) (where \( a \geq 1 \) and  (\( b \geq 1 \) or \( -a < b < 0 \))), the problem remains strongly NP-hard when restricted to complements of bipartite graphs. Moreover, they established that scheduling on identical machines with a simple, undirected conflict graph is polynomially equivalent to a special case of the resource-constrained scheduling problem, thereby deriving new complexity results for the latter. They further extended their NP-hardness results in~\cite{bendraouche2016scheduling} to more general agreement graphs and provided new complexity results for the cases of split graphs and complements of bipartite graphs. More recently, \citet{mohabeddine2019new} proved that the problem remains NP-hard for two machines when the conflict graph is a complement of a tree. They also proposed polynomial-time algorithms for the cases where the conflict graph is a complement of a caterpillar or a cycle.

Scheduling with conflict graphs has also been studied for dedicated machines, specifically in the contexts of the open shop~\cite{TellacheBoudharDAM2017, TELLACHE2019154, TELLACHE202185, tellache2023genetic} and the flow shop~\cite{tellache2018ANOR, tellache2017two, cai2018approximation} problems, with the objective of minimizing the maximum completion time.

In the absence of conflict constraints, the problem of minimizing the mean flow time on identical parallel machines—denoted as \( Pm\,||\,\sum C_j \)—can be solved in polynomial time by scheduling jobs in non-decreasing order of their processing times, according to the Shortest Processing Time (SPT) rule~\cite{conway1967theory}. However, when job weights are introduced and the objective is to minimize the total weighted completion time, i.e., \( Pm\,||\,\sum w_j C_j \), the problem becomes strongly NP-hard~\cite{bruno1974scheduling}. \citet{kowalczyk2018branch} investigated this weighted variant and proposed a branch-and-price algorithm incorporating several enhancements, including a generic branching strategy, zero-suppressed binary decision diagrams, dual stabilization, and Farkas pricing. For the same problem, \citet{kramer2019enhanced} introduced an arc-flow formulation on a capacitated network, which proved effective in solving instances with up to 400 jobs to proven optimality. Their approach generally outperformed direct time-indexed formulations and earlier branch-and-price methods, including that of \citet{kowalczyk2018branch}. Further extending this line of research, \citet{KRAMER2021} studied a variant that includes setup times and proposed multiple formulations based on one-commodity flow, arc-flow, and set-covering models. Their methods successfully handled instances with up to 80 jobs within practical computation times. Other related studies addressed mean flow time minimization on identical machines under different constraints. For instances, \citet{baptiste2007complexity} considered the preemptive scheduling of jobs with release times and showed that, when all processing times are equal, the mean flow time minimization problem can be solved in polynomial time using linear programming. \citet{wang2019complexity} focused on unit-time jobs with in-tree precedence constraints and demonstrated that the problem is strongly NP-hard by reduction from the 3-Partition problem.

As seen previously, minimizing the mean flow time on any number of identical machines—i.e., \( Pm\,||\,\sum C_j \)—can be solved in polynomial time in the absence of conflict constraints. However, we show that for arbitray conflict graphs, the problem becomes NP-hard on two machines, even when jobs have only two distinct processing times. For unit-time jobs, the problem remains polynomially solvable on two machines for any conflict graph. When the number of machines increases to three, the problem becomes NP-hard for arbitrary conflict graphs. Nevertheless, it remains polynomially solvable when the conflict graph is the complement of a bipartite graph. We also identify a tractable case for any number of machines when the conflict graph is the complement of a star. To address the general case $Pm\,|\,\text{ConfG} = (V, E)\,|\,\sum C_j$, we present three mixed-integer linear programming formulations and lower bounds on the mean flow time. We also propose genetic algorithms for the general problem and evaluate their performance on a wide set of instances derived from the benchmark instances of~\citet{kowalczyk2018branch}.

The remainder of this paper is structured as follows. Section~\ref{sec:complexity} discusses the complexity of the problem. Section~\ref{sec:MILP} presents mathematical models, followed by lower bounds in Section~\ref{sec:lowerbounds}. Section~\ref{sec:GA} describes genetic algorithms, and Section~\ref{sec:experiments} is devoted to the experiments.

\section{Computational complexity}\label{sec:complexity}
We observe two trivial cases that can be solved in polynomial time. First, if the conflict graph is empty (edgeless graph), the problem reduces to \( Pm\,||\,\sum C_j \), which can be solved optimally using the SPT rule~\cite{conway1967theory}. Second, if the conflict graph is a clique—meaning all jobs are in conflict—an optimal schedule can be obtained by processing the jobs in disjoint time intervals following the SPT order.

Another tractable case is when the conflict graph $G$ is the complement of a star. let \( J_1 \) denote the central job and \( J_2, \ldots, J_n \) the remaining jobs. Here, \( J_2, \ldots, J_n \) are all mutually in conflict, while each of them is in agreement with \( J_1 \). This implies that \( J_2, \ldots, J_n \) must be scheduled sequentially in non-overlapping time intervals, whereas \( J_1 \) can be scheduled concurrently with any of them. Since \( m \geq 2 \), an optimal schedule consists of assigning \( J_1 \) to start at time $0$ on one machine, while the remaining jobs are processed in SPT order on a second machine, also starting at time $0$.

In the next theorem, we show that the problem becomes NP-hard on two machines for arbitrary conflict graphs, when job processing times are restricted to two distinct values, \( p_j \in \{a, b\} \), with \( a \ne b \) for all \( j = 1, \ldots, n \).

\begin{theorem}\label{th1}
	Problem $ P2|\text{ConfG} = (V, E),p_{j}\in\{a,b\}|\sum C_j$ is NP-hard in the strong sense for $a\ne b$.
\end{theorem}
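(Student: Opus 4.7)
The plan is to establish strong NP-hardness by a polynomial-time reduction from $P2\,|\,\text{ConfG}=(V,E),\,p_j\in\{a,b\}\,|\,C_{\max}$, which Bendraouche and Boudhar~\cite{bendraouche2015scheduling} proved strongly NP-hard for suitable distinct processing-time pairs $\{a,b\}$. Write WLOG $a<b$. Given an instance $I$ of the source problem with $n$ jobs, conflict graph $G$, and makespan threshold $T$, I would build an instance $I'$ by adjoining $N$ \emph{filler} jobs of processing time $b$, where (i) all fillers form a clique in the conflict graph and (ii) every filler is in conflict with every original job. The integer $N$ is chosen polynomially in the unary input, namely $N=nT+1$; this is polynomial because strong NP-hardness of the source ensures $T$ is polynomially bounded. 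The target threshold for the mean flow time is set to $T^{\ast}:=(n+N)T+\tfrac{bN(N+1)}{2}$.

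The forward direction is direct: any $I$-schedule with $C_{\max}\leq T$ is extended by scheduling the $N$ fillers sequentially on one machine starting at time $T$, giving $\sum C_j\leq nT+NT+\tfrac{bN(N+1)}{2}=T^{\ast}$. For the backward direction, the filler clique forces the $N$ fillers into pairwise-disjoint intervals of length $b$, and the filler-to-original conflicts keep each such interval disjoint from every original job. An exchange argument then shows that, in any optimal $I'$-schedule, one may assume all fillers follow all originals: moving a filler at time $[t,t+b]$ to the end of the schedule defers its completion by at most $mb$, where $m$ is the number of jobs scheduled after it, while pulling each of these $m$ subsequent jobs $b$ units earlier, a net non-positive change since $b\geq p_j$ for every original. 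Iterating, the resulting optimal schedule has all originals first, forming a feasible $I$-schedule with some makespan $C$; if every $I$-schedule has $C_{\max}\geq T+1$, then $C\geq T+1$ and hence $\sum C_j\geq N(T+1)+\tfrac{bN(N+1)}{2}>T^{\ast}$ by the choice $N>nT$, establishing the contrapositive.

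The main obstacle will be verifying the correctness of the exchange step. I must check that, after removing a filler and shifting all subsequent jobs back by $b$ units, the resulting schedule remains feasible on two machines with respect to the augmented conflict graph; this uses the fact that the filler blocks both machines during its interval (being in conflict with every original and with all other fillers), so the removal uniformly frees $b$ units on both machines, and that pre-filler jobs end by $t$ while post-filler jobs, after the shift, start at $t$ or later, preventing any new overlap. I must also confirm that the post-filler makespan in an optimal schedule does not exceed $mb$, which follows from the absence of avoidable idle time in the optimum together with the trivial single-machine upper bound on the work that remains. A brief case distinction between subsequent originals and subsequent fillers closes the argument; the remainder is routine bookkeeping around the choice $N>nT$ to guarantee a strict $\sum C_j$-gap between makespans $T$ and $T+1$.
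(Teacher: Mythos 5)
Your reduction machinery is mostly sound, but it does not prove the theorem as stated, and the shortfall is in the very first step: the choice of source problem. The theorem asserts strong NP-hardness for \emph{every} pair of distinct processing times $\{a,b\}$. The result of \citet{bendraouche2015scheduling} that you reduce from covers only pairs of the form $\{a,\,2a+b'\}$ with $a\ge 1$ and ($b'\ge 1$ or $-a<b'<0$), so the larger value ranges over $\{a+1,\dots,2a-1\}\cup\{2a+1,2a+2,\dots\}$ and never equals $2a$. Consequently pairs with ratio exactly $2$ --- most notably $\{1,2\}$ --- are not covered by your argument, whereas the theorem claims them. The paper sidesteps this by reducing directly from Hamiltonian Path: the $n$ original vertices become jobs of length $b$, a single universal ``anchor'' job of length $a<b$ is added to the agreement graph, and a counting argument shows that the only schedules meeting the $\sum C_j$ threshold are the staircase schedules induced by Hamiltonian paths. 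That construction works for arbitrary $a\ne b$, which is precisely the generality your reduction loses. To repair your proof you would need either the paper's direct reduction or a strong NP-hardness result for the makespan version valid for all distinct pairs.

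For the pairs your source result does cover, the argument is correct and is a genuinely different route (a makespan-to-flowtime blocker reduction rather than a bespoke combinatorial one). The exchange step works: a filler conflicts with everything, so it occupies an exclusive window $[t,t+b]$ and every other job either ends by $t$ or starts at or after $t+b$; deleting it and left-shifting the $m$ later jobs by $b$ preserves feasibility, saves $mb$ in total completion time, and defers the filler by at most the length of the post-filler segment, which is at most $\sum p_j\le mb$ over those $m$ jobs \emph{provided} no instant has both machines idle while jobs remain --- true of active schedules, to which you may restrict without loss, but this should be said. Two further points to make explicit: the step ``every $I$-schedule has $C_{\max}\ge T+1$'' needs the standard observation that optimal makespans are integral (active schedules have start times that are sums of processing times), and the bound $N=nT+1$ is polynomial only after restricting to the polynomially bounded instances produced by the source reduction, which you do note. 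What your approach buys, where it applies, is modularity --- it would transfer any future strong NP-hardness result for the makespan variant to the mean flow time variant --- while the paper's approach buys the full range of $a\ne b$.
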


\setlength{\unitlength}{1cm}
\begin{figure}[h!]
	\begin{center}
	\psscalebox{1.0 1.0} 
	{
		\includegraphics{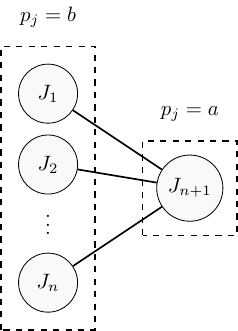}
	}
\end{center}
	\caption{Agreement graph $\overline{G} = (V, \overline{E})$ associated with the reduction of Theorem~\ref{th1}}\label{Fig1}
\end{figure}
\begin{proof} We establish this result through a polynomial reduction from the Hamiltonian Path (HP) problem. Given an undirected graph \( G' = (V', E') \), the HP problem asks whether there exists a path that visits each vertex exactly once. It is well known that the HP problem is NP-complete~\cite{GJ79}.
	
	Given an arbitrary instance of HP, we construct an instance of \( P2 | \text{ConfG} = (V, E), p_i \in \{a, b\} | \sum C_j \) as follows (see Figure~\ref{Fig1}). Specifically, we derive the agreement graph $\overline{G} = (V, \overline{E})$ from \( G' \), where:
	\begin{itemize}
		\item $V=V'\cup \{J_{n+1}\} $ (adding a extra vertex),
		\item $\overline{E}=E'\cup \{(J_{i}, J_{n+1}), j \in 1,...,n\}$,
	\end{itemize}
	The processing times of the jobs are defined as follows, with $a < b$:
	\begin{itemize}
		\item $p_j=b$, $j=1,...,n$;
		\item $p_{n+1}=a$.
	\end{itemize}
	
	We show that the HP problem has a solution if and only if there exists a feasible schedule of the jobs such that the total completion time satisfies
	
	\begin{equation}
		\sum C_j \leq 
		\begin{cases}
			\dfrac{n+2}{2}\left( \dfrac{n}{2}b+a\right), & \text{if } n \text{ is even}, \\
			\\
			\dfrac{n+1}{2}\left( \dfrac{n+1}{2}b+a\right), & \text{if } n \text{ is odd}.
		\end{cases}
		\label{eqth1:completion_times}
	\end{equation}

	Suppose the HP problem has a solution, meaning there exists a Hamiltonian path in the graph \(G'\) that visits each vertex exactly once. From this solution, we construct a solution to our scheduling problem by assigning the jobs to two machines according to their order in the Hamiltonian path, alternating between the two machines and starting with \(J_{n+1}\), as shown in Figures~\ref{Fig2a} and~\ref{Fig2b}.
	
	\begin{figure}[H]
		\centering 
		\begin{center}
		\psscalebox{1.0 1.0} 
		{
			\includegraphics{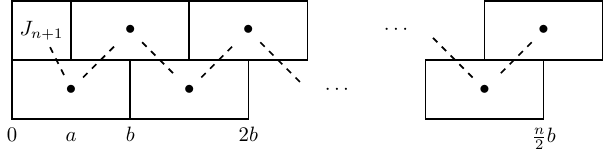}
		}
	\end{center}
		\caption{Feasible schedule for even $n$ in the reduction instance of Theorem~\ref{th1}.}\label{Fig2a}
	\end{figure}
	
	\begin{figure}[H]
			\begin{center}
			\psscalebox{1.0 1.0} 
			{
				\includegraphics{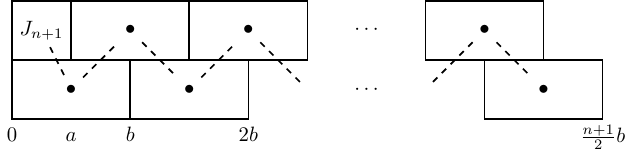}
			}
		\end{center}
		\caption{Feasible schedule for odd $n$ in the reduction instance of Theorem~\ref{th1}.}\label{Fig2b}
	\end{figure}
	
	The sum of the completion times when $n$ is even is given by:
	$$
	\sum C_j= \left( a+(a+b)+(a+2b)+\ldots+\left(a+ \frac{n}{2}b\right) \right) +\left( b+2b+\ldots+\frac{n}{2}b\right) 
	$$
	
	This simplifies to:
	$$
	\sum C_j= \dfrac{n+2}{2}\left( \dfrac{n}{2}b+a\right) 
	$$
	
	Similarly, for $n$ odd, the sum of the completion times is:
	$$
	\sum C_j= \left( a+(a+b)+(a+2b)+\ldots+\left(a+ \frac{n-1}{2}b\right) \right) +\left( b+2b+\ldots+\frac{n+1}{2}b\right) 
	$$
	
	This simplifies to:
	$$
	\sum C_j= \dfrac{n+1}{2}\left( \dfrac{n+1}{2}b+a\right) 
	$$
	
	Therefore, the desired result follows.
	
	Conversely, suppose that the scheduling problem has a feasible solution that satisfies~\eqref{eqth1:completion_times}. We now demonstrate that the only possible schedule satisfying~\eqref{eqth1:completion_times} is the one shown in Figures~\ref{Fig2a} and~\ref{Fig2b}. Suppose, by contradiction, that there exists another schedule satisfying~\eqref{eqth1:completion_times} but differing from the one in Figures~\ref{Fig2a} and~\ref{Fig2b}. Such a schedule could arise in one of the following ways:
	
	\begin{itemize} 
		\item Scheduling $J_{n+1}$ at a time later than $0$. This would increase the sum of the completion times according to the SPT rule, thereby violating~\eqref{eqth1:completion_times}. 
		\item Scheduling $J_{n+1}$ at the beginning, but not in parallel with any other job. This would introduce idle time on the opposite machine, and it can be checked that~\eqref{eqth1:completion_times} would not hold.
		\item Scheduling at least one job with processing time $b$ not in parallel with two other jobs. This would cause idle time on one machine because $J_{n+1}$ is scheduled at the beginning, which increases the sum of the completion times and violates~\eqref{eqth1:completion_times}. \end{itemize}
	
	Thus, the only feasible schedule that satisfies~\eqref{eqth1:completion_times} is the one shown in Figures~\ref{Fig2a} and~\ref{Fig2b}. A Hamiltonian path is constructed by following the dashed paths in these figures, which connect jobs scheduled in parallel. Finally, it can be verified that the reduction is polynomial, completing the proof of Theorem~\ref{th1}.
\end{proof}

When $p_j=1$ for all jobs, the two-machine problem can be solved in polynomial time for arbitrary conflict graphs by finding a maximum matching in the agreement graph. The algorithmic steps are outlined in Algorithm~\ref{Algo1}.

\begin{algorithm}[H]
	\caption{Optimal schedule for $P2|\text{ConfG} = (V, E),p_j=1|\sum C_j$}\label{Algo1}
	\begin{algorithmic}[1]
		\State Find a maximum matching $M$ in $\overline{G}$.
		\State Schedule each pair of jobs from the matching to run in parallel at the earliest available time, in any order (see Figure~\ref{Fig6}).
		\State Schedule the remaining jobs (those not in the matching) after the matched jobs, in disjoint time intervals, starting at the earliest possible time and in any order (see Figure~\ref{Fig6}).
	\end{algorithmic}
\end{algorithm}

\begin{figure}[h!]
	\begin{center}
	\psscalebox{1.0 1.0} 
	{
		\includegraphics{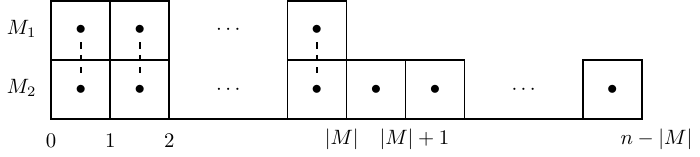}
	}
\end{center}
	\caption{Optimal schedule for $P2|G =(V, E),p_{j}=1|\sum C_j$, as generated by Algorithm~\ref{Algo1}}\label{Fig6}
\end{figure}

\begin{theorem}\label{thm2}
	$P2 \mid \text{ConfG} = (V, E),\ p_j = 1 \mid \sum C_j $  can be solved in polynomial time by Algorithm~\ref{Algo1}, and the optimal total completion time is
	\[
	\sum C_j = |M|(|M|-n) + \frac{n(n+1)}{2},
	\]
	where \( M \) is a maximum matching in \( \overline{G} \).
\end{theorem}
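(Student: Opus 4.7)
The plan is to establish two things: (a) direct computation of the objective value produced by Algorithm~\ref{Algo1}, and (b) a matching lower bound showing that no feasible schedule can do better. Optimality of the algorithm then follows.

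For (a), observe that Algorithm~\ref{Algo1} fills time slots $1,\ldots,|M|$ with two jobs each (the matched pairs, compatible by construction), and then fills slots $|M|+1,\ldots,n-|M|$ with one job each. Summing completion times gives
\[
\sum C_j \;=\; 2\sum_{t=1}^{|M|} t \;+\; \sum_{t=|M|+1}^{n-|M|} t \;=\; |M|(|M|+1) + \frac{(n-|M|)(n-|M|+1)-|M|(|M|+1)}{2},
\]
which a short simplification reduces to $|M|(|M|-n)+\tfrac{n(n+1)}{2}$.

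For (b), I would first recast any feasible schedule as an ordered sequence of time slots, each containing either one job or a pair of jobs processed in parallel; the pairs must be edges of $\overline{G}$. If $k$ denotes the number of pair slots, these $k$ pairs form a matching in $\overline{G}$, hence $k\le |M|$. Next, a standard exchange argument shows that in any optimal schedule all pair slots can be placed before the single slots: swapping a pair slot at time $t'$ with an earlier single slot at time $t<t'$ strictly decreases $\sum C_j$ by $t'-t>0$ and preserves feasibility since compatibility is a property of the pair, not of its time position. After this rearrangement the objective becomes the function
\[
f(k) \;=\; k(k+1) + \sum_{t=k+1}^{n-k} t,
\]
and a direct calculation of $f(k)-f(k-1) = 2k-n-1$ shows that $f$ is strictly decreasing on $\{0,1,\ldots,\lfloor n/2\rfloor\}$. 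Therefore $f(k)$ is minimized at $k=|M|$, matching the value computed in (a).

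The main obstacle is step (b): one must be careful that pair slots truly correspond to a matching (i.e.\ each job appears in at most one slot and parallel jobs are non-adjacent in $G$), and that the swap argument is legitimate regardless of which specific jobs occupy each slot. Once these structural observations are in place, the monotonicity of $f(k)$ and the bound $k \le |M|$ combine immediately to give the desired lower bound, so equality with the value produced by Algorithm~\ref{Algo1} certifies its optimality. Finally, since a maximum matching in $\overline{G}$ can be computed in polynomial time by Edmonds' algorithm, the overall procedure runs in polynomial time.
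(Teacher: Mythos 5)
Your proposal is correct, and it takes a genuinely different (and more robust) route than the paper. The paper's optimality argument is a short proof by contradiction: if a better schedule existed, some job would have to finish earlier; since the pairs of $M$ are already scheduled as early as possible, the only conceivable improvement would be to run two \emph{unmatched} jobs in parallel, which would contradict the maximality of $M$. That argument implicitly assumes the competing schedule keeps the pairs of $M$ intact, whereas an arbitrary schedule could use an entirely different set of parallel pairs. Your proof closes exactly this gap: you lower-bound the value of an \emph{arbitrary} feasible schedule by observing that its parallel pairs form a matching in $\overline{G}$ of some size $k \le |M|$, pushing paired slots to the front by an exchange argument, and then using the monotonicity $f(k)-f(k-1)=2k-n-1<0$ for $k\le\lfloor n/2\rfloor$ to conclude $f(k)\ge f(|M|)$. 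The closed-form evaluation of the algorithm's schedule is the same in both proofs. Two small points you should state explicitly to make part (b) airtight: first, one must normalize an optimal schedule to integer, idle-free slots (jobs have unit length but could in principle start at fractional times or leave idle gaps; left-shifting and aligning overlapping jobs never increases $\sum C_j$ and never creates a new simultaneity, so this is harmless); second, the objective of the normalized schedule is at least $f(k)$ rather than exactly $f(k)$ if any idle slots remain, which only strengthens the bound. With those remarks added, your argument is a complete and self-contained proof of both the optimality of Algorithm~\ref{Algo1} and the stated formula, with polynomiality following from Edmonds' matching algorithm as you note.
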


\begin{proof}
	Suppose by contradiction that the schedule produced by Algorithm~\ref{Algo1} is not optimal. Then, there exists another schedule with a smaller total completion time. This would require that at least one job is completed earlier than in the schedule produced by Algorithm~\ref{Algo1}. However, the jobs corresponding to the matching \( M \) are already scheduled as early as possible, in parallel pairs. Therefore, the only way to improve the schedule would be to schedule one of the unmatched jobs earlier, in parallel with another unmatched job. This would imply that these two jobs are not in conflict, meaning that an edge exists between them in the agreement graph \( \overline{G} \). Consequently, the matching \( M \) would not be maximum, contradicting its maximality. Thus, the schedule produced by Algorithm~\ref{Algo1} is optimal. As a maximum matching can be computed in polynomial time, the problem $P2 \mid \text{ConfG} = (V, E),\ p_j = 1 \mid \sum C_j $ is solvable in polynomial time.
	
	It remains to compute the optimal total completion time. We have:
	$$
	\sum_{j=1}^{n} C_j = 2(1+2+\ldots+|M|) + ((|M|+1)+(|M|+2)\ldots+(n-|M|)) 
	$$
	
	$$
	\sum_{j=1}^{n} C_j = 2 \times \left(\dfrac{|M| (|M| + 1)}{2}\right) + 
	\dfrac{n+1}{2} (n-2|M|)
	$$
	
	$$
	\sum_{j=1}^{n} C_j = \dfrac{n(n+1)}{2}+|M|(|M|-n)
	$$
\end{proof}


We show below that when the number of machines increases to three, the problem with unit-time jobs becomes strongly NP-hard for arbitrary conflict graphs.

\begin{theorem} \label{thm3}
	Problem $P3|\text{ConfG} = (V, E),p_j=1|\sum C_j$ is NP-hard in the strong sense. 
\end{theorem}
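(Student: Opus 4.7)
The plan is to reduce from the strongly NP-complete \emph{Partition Into Triangles} problem~\cite{GJ79}: given an undirected graph $G'=(V',E')$ with $|V'|=3q$, decide whether $V'$ can be partitioned into $q$ vertex-disjoint triangles of $G'$. From an arbitrary such instance I would construct a scheduling instance with $n=3q$ unit-time jobs (one per vertex of $G'$), three identical machines, and conflict graph $G$ equal to the complement of $G'$ (so that the agreement graph $\overline{G}$ coincides with $G'$). I would set the decision threshold to
\[
K \;=\; 3\,\frac{q(q+1)}{2} \;=\; 3(1+2+\cdots+q).
\]
The transformation is clearly polynomial.

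I would then prove equivalence in both directions. For the direct direction, any triangle partition $T_1,\dots,T_q$ of $G'$ yields the schedule in which the three jobs of $T_i$ are processed in parallel during unit time slot $i$. Each $T_i$ is a triangle of $\overline{G}$, so its three jobs are pairwise non-conflicting and feasibly occupy the three machines; the total completion time is $3(1+2+\cdots+q)=K$. For the converse, I would use the observation that, in any feasible schedule of $3q$ unit jobs on three machines, each unit time slot $t\ge 1$ contains at most three pairwise non-conflicting jobs (an independent set of $G$). Writing $k_t$ for the number of jobs completing at time $t$, one has $\sum C_j=\sum_{t\ge 1}t\,k_t$ subject to $k_t\in\{0,1,2,3\}$ and $\sum_t k_t=3q$.

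The heart of the argument, and the only step I expect to require real care, is the claim that $\sum_{t\ge 1}t\,k_t\le K$ forces $k_1=\cdots=k_q=3$ and $k_t=0$ for $t>q$. This I would establish by a short exchange argument on the integer vector $(k_t)$: whenever $k_t<3$ for some $t\le q$, the cardinality constraint forces $k_{t'}>0$ for some $t'>q$, and decrementing $k_{t'}$ while incrementing $k_t$ preserves both constraints and strictly lowers the objective by $t'-t>0$. Hence $K$ is the unique constrained minimum. Once this is settled, any schedule achieving $\sum C_j\le K$ must consist of exactly $q$ saturated slots, each of which is an independent set of size three in $G$, i.e., a triangle of $G'$. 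Reading off these triples gives the required triangle partition of $V'$, and strong NP-hardness follows since the source problem is strongly NP-complete and the reduction is polynomial.
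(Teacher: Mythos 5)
Your proposal is correct and follows essentially the same route as the paper: a polynomial reduction from Partition Into Triangles with the agreement graph equal to $G'$, three machines, unit jobs, and the threshold $K=3q(q+1)/2$. Your converse via the exchange argument on the vector $(k_t)$ is in fact a slightly more complete justification than the paper's, which only computes the objective for the single ``best'' deviation (one job delayed to slot $q+1$) and observes it already exceeds $K$ by one.
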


\begin{proof} The proof is by reduction from the Partition Into Triangles (PIT) problem, which is known to be NP-complete~\cite{GJ79}. The PIT problem is defined as follows: given a graph \( G' = (V', E') \) where \( |V'| = 3q \) and \( q \in \mathbb{N}^* \), the objective is to partition the vertex set \( V' \) into \( q \) subsets \( V'_1, V'_2, \dots, V'_q \), each of size 3, such that the vertices in each subset form a triangle (i.e., each subset is a complete subgraph of \( G' \)).
	
	Given an arbitrary instance of the PIT problem, we construct an instance of our scheduling problem as follows. The agreement graph for the scheduling problem is identical to the graph \( G' \), so the number of jobs is \( n = 3q \), with each job having a processing time of one unit. These jobs are to be processed on three identical machines.
	
	We show that the PIT problem has a solution if and only if there exists a feasible schedule with \( \sum C_j \leq \frac{3q(q+1)}{2} \).
	
	Assume that the PIT problem has a solution, meaning that the graph \( G' \) can be partitioned into triangles \( V'_j \) for \( j = 1, \dots, q \). In this case, it is possible to schedule the corresponding jobs on the machines such that each triangle is processed simultaneously, without violating any conflict constraints (see Figure \ref{Fig8}).

	\begin{figure}[h!]
		\begin{center}
		\psscalebox{1.0 1.0} 
		{
			\includegraphics{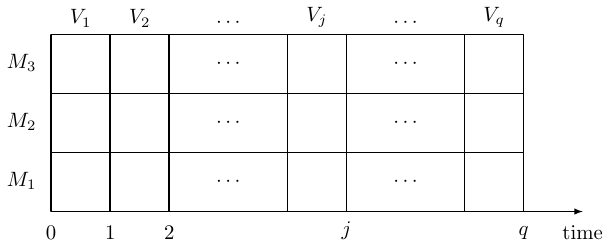}
		}
	\end{center}
		\caption{Feasible schedule corresponding to the reduction instance of Theorem~\ref{thm3}.}\label{Fig8}
	\end{figure}
	
	The sum of completion times of the  $ 3q $ jobs is:
	$$
	\sum_{j=1}^{n} C_j = 3 \times 1 + 3 \times 2 + 3 \times 3 + \dots + 3 \times q
	$$
	
	Thus, 
	$$
	\sum_{j=1}^{n} C_j = \frac{3q(q + 1)}{2}.
	$$
	
	\begin{figure}[h!]
		\begin{center}
		\psscalebox{1.0 1.0} 
		{
			\includegraphics{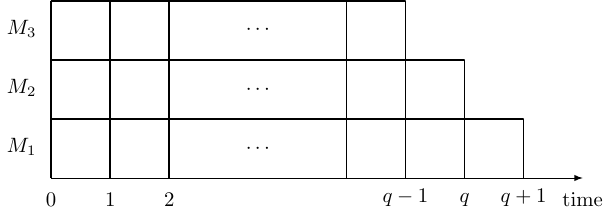}
		}
	\end{center}
		\caption{Impact of scheduling a job after time $q$ in the reduction instance of Theorem~\ref{thm3}.}\label{Fig9}
	\end{figure}
	
	Conversely, suppose that the scheduling problem has a feasible schedule with $\sum C_j \leq \frac{3q(q + 1)}{2}$. If we show that all jobs are completed by time $q$, then the schedule must be free of idle time on the machines. In this case, a solution to the PIT problem can be derived by considering the jobs scheduled in parallel for each $j=1,\ldots,n$, as these parallel jobs must form a clique in the agreement graph since they are scheduled simultaneously.
	
	Assume that at least one job is processed after time \( q \) (see Figure \ref{Fig9}). In this case, the sum of the completion times is given by:
	
	$$
	\sum_{j=1}^{n} C_j = 3 \times (1 + 2 + \ldots + (q - 1)) + 2q + (q + 1)
	$$
	
	Simplifying this expression:
	
	$$
	\sum_{j=1}^{n} C_j = 3 \left( \frac{(q - 1)q}{2} \right) + 3q + 1
	$$
	
	We have:
	
	$$
	3 \left( \frac{(q - 1)q}{2} \right) + 3q + 1 > \frac{3q(q + 1)}{2}
	$$
	
	This contradicts the assumption that \( \sum C_j \leq \frac{3q(q + 1)}{2} \). Therefore, all jobs must be completed by time \( q \). The reduction is polynomial, thus concluding the proof.
	
\end{proof}

We now show that the problem with unit-time jobs becomes polynomial on three machines when the agreement graph is bipartite.

\begin{theorem} 
	Problem $P3|\text{ConfG} = (V, E),p_j=1|\sum C_j$ can be solved in polynomial time for $G$ being a complement of a bipartite graph.
\end{theorem}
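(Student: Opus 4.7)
The plan is to show that the third machine provides no benefit when the agreement graph $\overline{G}$ is bipartite, and then to invoke the two-machine result of Theorem~\ref{thm2} verbatim.

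The key observation is that a bipartite graph contains no triangles (in fact, no odd cycles at all). If three jobs were to be processed simultaneously on the three machines, they would have to be pairwise non-conflicting, i.e., they would induce a triangle in $\overline{G}$. Since $\overline{G}$ is triangle-free, this is impossible, so at most two jobs can overlap in time in any feasible schedule. With unit processing times, this means every optimal schedule can be realised using only two of the three machines, with the third remaining idle throughout.

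Once this reduction is in place, the instance reduces to an instance of $P2\,|\,\text{ConfG}=(V,E),\,p_j=1\,|\,\sum C_j$, which Theorem~\ref{thm2} already resolves via Algorithm~\ref{Algo1}. Concretely, I would run Algorithm~\ref{Algo1} on the input: compute a maximum matching $M$ in $\overline{G}$, schedule the matched pairs in parallel at times $1,2,\ldots,|M|$, and then schedule each remaining unmatched job singly at the subsequent integer times. Because $\overline{G}$ is bipartite, the matching step can be carried out particularly efficiently (e.g.\ via Hopcroft--Karp), and the resulting objective value is
\[
\sum C_j \;=\; |M|(|M|-n) + \frac{n(n+1)}{2}.
\]

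I do not foresee a real obstacle: the only conceptual step is translating ``$\overline{G}$ has no triangles'' into ``at most two machines can be busy at any instant,'' and everything downstream is an immediate appeal to Theorem~\ref{thm2}. For full rigour I would add the brief observation that any three-machine schedule can be restricted to use at most two machines simultaneously without losing feasibility or increasing $\sum C_j$, which is automatic since we are merely relabelling machines and each unit-time job still executes exactly once.
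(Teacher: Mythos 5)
Your proposal is correct and follows essentially the same route as the paper: the paper's proof likewise observes that a bipartite agreement graph has no clique of size three, so at most two jobs can run concurrently, reducing the instance to $P2\,|\,\text{ConfG}=(V,E),\,p_j=1\,|\,\sum C_j$ and invoking Theorem~\ref{thm2}. Your version just spells out the triangle-freeness argument and the appeal to Algorithm~\ref{Algo1} in slightly more detail.
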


\begin{proof} In a bipartite graph, any clique contains at most two vertices. Therefore, the problem is equivalent to the problem \( P2|\text{ConfG} = (V, E), p_j=1|\sum C_j \) with two machines, which is solvable in polynomial time (Theorem~\ref{thm2}).
\end{proof} 

\section{Mathematical formulations}  \label{sec:MILP}
This section introduces three mathematical formulations for $Pm|ConfG = (V,E)|\sum C_j$, adapted from existing models in the literature for scheduling variants on parallel machines.

\subsection{Time-indexed formulation}
Discrete-time models have been studied in the literature to address various machine scheduling problems, including those involving parallel machines (see, e.g.,\cite{velez2017changeover, kramer2021mathematical, wolsey1997mip}). Building on these models, we present a formulation that incorporates the conflict constraints between jobs. This formulation makes use of a binary decision variable $x_{ijt}$, which equals 1 if job $ J_{j} $ starts processing on machine $ M_{i} $ at time $t$, and 0 otherwise. Let $T$ represent an upper bound on the job completion time. The first ILP formulation, denoted by \mFA, is summarized as follows.


\begin{subequations}\label{P1}
	\begin{align}
		(\mFA)\quad \min \quad & \sum_{j \in J} \sum_{i \in M} \sum_{t=0}^{T-p_j} (t+p_j) x_{ijt} \label{P1:obj}\\ 
		s.t. \quad & \sum_{i \in M} \sum_{t=0}^{T-p_j} x_{ijt} =1, && j \in J \label{P1:c1}\\
		&  \sum_{j \in J} \sum_{s=\max\{0,t-p_j+1\}}^{\min\{t,T-p_j\}} x_{ijs} \leq 1, && i\in M, t \in\{0,\ldots,T-1\} \label{P1:c2} \\
		& \sum_{i \in M} \sum_{s=\max\{0,t-p_j+1\}}^{\min\{t,T-p_j\}} x_{ijs} + \sum_{i \in M} \sum_{s=\max\{0,t-p_k+1\}}^{\min\{t,T-p_k\}} x_{iks} \leq 1, && j, k \in J \textrm{ in conflict},  j < k,\notag\\
		&&& t \in\{0,\ldots,T-1\} \label{P1:c3}\\	
		&x_{ijt} \in \{0,1\}, &&i \in M, j \in J, \notag\\
		&&&t \in \{0,\ldots,T-p_j\}. \label{P1:c4} 
	\end{align}
\end{subequations}

The objective function~\eqref{P1:obj} minimizes the total completion times of the jobs. Constraints~\eqref{P1:c1} ensure that each job is scheduled on exactly one machine. Constraints~\eqref{P1:c2} ensure that each machine can schedule at most one job at a time. Constraints~\eqref{P1:c3} ensure that two conflicting jobs cannot be scheduled at the same time on different machines.

\subsection{Time and precedence relationship-indexed formulation}
This formulation is inspired by the one-commodity model, which has been applied in various contexts, including parallel machine scheduling~\cite{kramer2021mathematical}. It incorporates both time-indexed and precedence relationship variables. We introduce the decision variable $x_{jkt}$, which equals 1 if job $ J_{k} $ is processed immediately after $J_j$ on the same machine, and starts processing at $t$, and 0 otherwise. We also use the decision variable $y_{jk}$, which is equal to 1 if job $ J_{j} $ is scheduled any time before $J_k$ and 0 otherwise. Let $J_0$ be a fictitious job with $p_0=0$ and $J'=J \cup \{J_0\}$. The formulation is summarized in (\mFB).




\begin{subequations}\label{P2}
	\begin{align}
		(\mFB)\quad\min \quad & \sum_{j \in J'} \sum_{k \in J\setminus \{j\}} \sum_{t=p_j}^{T-p_k} (t+p_k) x_{jkt} \label{P2:obj}\\ 
		s.t. \quad & \sum_{k \in J}\sum_{t=0}^{T-p_k}  x_{0kt} \leq m, &&  \label{P2:c1}\\
		&  \sum_{j \in J}\sum_{t=p_j}^{T}  x_{j0t} \leq m, &&   \label{P2:c2}\\
		& \sum_{j \in J' \setminus\{k\}} \sum_{t=p_j}^{T-p_k} x_{jkt} = 1, &&  k \in J  \label{P2:c3}\\
		&  \sum_{j \in J'\setminus \{k\}} x_{jkt} = \sum_{j \in J' \setminus \{k\}} \sum_{t'=t+p_k}^{T-p_j} x_{kjt'}, && k\in J,\notag \\
		&       && t \in\{0,\ldots,T-p_k\}    \label{P2:c4} \\
		& -T(1-y_{jk})+ \sum_{l \in J'\setminus \{j\}} \sum_{t=p_l}^{T-p_j}(t+p_j) x_{ljt}  \leq \sum_{l \in J' \setminus \{k\}} \sum_{t=p_l}^{T-p_k}t x_{lkt}, &&   j  < k \in J \textrm{ in conflict},  \label{P2:c5} \\
		&  \sum_{l \in J'\setminus \{j\}} \sum_{t=p_l}^{T-p_j}t x_{ljt} \geq \sum_{l \in J' \setminus \{k\}} \sum_{t=p_l}^{T-p_k}(t+p_k) x_{lkt}-Ty_{jk}, &&   j  < k \in J \textrm{ in conflict}, \label{P2:c6} \\
		&x_{jkt} \in \{0,1\}, &&j, k\in J', j\neq k, \notag \\
		&&& t \in \{p_j,\ldots,T-p_k\} \label{P2:c7} \\
		&y_{jk} \in \{0,1\}, && j <  k \in J \textrm{ in conflict}.\label{P2:c7} 
	\end{align}
\end{subequations}  
The objective function~\eqref{P2:obj} minimizes the total completion times of the jobs. Constraints~\eqref{P2:c1} and~\eqref{P2:c2}  ensure that $J_0$ has $m$ immediate successors and $m$ immediate predecessors respectively. Constraints~\eqref{P2:c3} ensure that each job in $J$ has one immediate predecessor on a given machine. Constraints~\eqref{P2:c4} ensure that the immediate successor of a job starts after the completion time of this job. Constraints~\eqref{P2:c5} and~\eqref{P2:c6} ensure that two conflicting jobs cannot be processed at the same time on two different machines.

\subsection{Precedence relationship-indexed formulation}
This formulation is based on the precedence relationships between jobs and utilizes the decision variable $x_{jk}$, which equals 1 if job $ J_{k} $ is processed immediately after $J_j$ on the same machine, and 0 otherwise, and the decision variable $ y_{jk}$, which is equal to 1 if job $ J_{j} $ is scheduled any time before $J_k$, and 0 otherwise. We also define the continuous variable $C_j$, which represents the completion time of job  $J_j$. Similarly to (\mFB), we introduce a fictitious job $J_0$ with $p_0=0$, and let $J'=J \cup \{J_0\}$. This is summarized in (\mFC).




\begin{subequations}\label{P3}
	\begin{align}
		(\mFC)\qquad \min \quad & \sum_{j \in J}  C_{j} \label{P3:obj}\\ 
		s.t. \quad & \sum_{j \in J}  x_{0j} \leq m, &&  \label{P3:c1}\\
		&  \sum_{j \in J}  x_{j0} \leq m, &&   \label{P3:c2}\\
		& \sum_{j \in J'} x_{jk} = 1, &&  k \in J  \label{P3:c3}\\
		&  \sum_{k \in J'} x_{jk} = 1, &&  j \in J  \label{P3:c4}\\ 
		& -T(1-x_{jk})+ C_{j}  \leq  C_{k} -p_k, &&  j, k \in J, j \neq k,  \label{P3:c5} \\
		& -T(1-y_{jk})+ C_{j}  \leq  C_{k} -p_k, &&  j, k \in J \textrm{ in conflict},  j < k,\label{P3:c6} \\
		&  C_j-p_j \geq C_k-Ty_{jk}, &&  j, k \in J \textrm{ in conflict}, j < k, \label{P3:c7} \\
		&x_{jk} \in \{0,1\}, && j, k \in J', \label{P3:c8} \\
		&y_{jk} \in \{0,1\}, &&j, k \in J \textrm{ in conflict},  j < k,\label{P3:c9} \\
		&C_{j} \geq p_j, &&j \in J.\label{P3:c10} 
	\end{align}
\end{subequations}


	The objective function~\eqref{P3:obj} minimizes the total completion times of the jobs. Constraints~\eqref{P3:c1} and~\eqref{P3:c2}  ensure that $J_0$ has $m$ immediate successors and $m$ immediate predecessors respectively. Constraints~\eqref{P3:c3} and~\eqref{P3:c4} ensure that each job in $J$ has one immediate predecessor and one immediate successor on a given machine respectively. Constraints~\eqref{P3:c5} ensure that the immediate successor of a job starts after the completion time of this job. Constraints~\eqref{P3:c6} and~\eqref{P3:c7} ensure that two conflicting jobs cannot be processed at the same time on two different machines.
	
	\section{Lower bounds} \label{sec:lowerbounds}
	We present in this section three different classes of lower bounds for $Pm|ConfG = (V,E)|\sum C_j$.
	
	\subsection{Conflict constraint relaxation-based lower bounds}
	By relaxing the conflict constraints between the jobs, our problem is reduced to $P||\sum C_{j}$. This problem can be solved optimally by applying the SPT rule, where jobs are processed in non-decreasing order of their processing times. The corresponding $\sum C_{j}$ is a valid lower bound for $P|ConfG=(V,E)|\sum C_{j}$, denoted as $LB_1$.
	
	\subsection{Weighted independent set-based lower bounds}
	The idea of this lower bound is to find a set of conflicting jobs with maximum total processing time. These jobs should be processed in disjoint time intervals, thus, if we schedule them in non-decreasing order of their processing times, the corresponding sum of completion times is a valid lower bound for $P|ConfG=(V,E)|\sum C_{j}$.
	
	A set of conflicting jobs forms an independent set in the agreement graph $\overline{G}$. We assign a weight to each job in $V$ equal to its processing time, and let $\overline{G}_w=(V,\overline{E},W)$ be the resulting weighted agreement graph. The problem of finding an independent set of maximum weight is known to be NP-hard on arbitrary graphs~\cite{GJ79}. Therefore, we use three greedy algorithms from~\cite{SMK03}, namely GWMIN, GWMIN2, and GWMAX. We denote the resulting lower bounds as $LB_2$, $LB_3$, and $LB_4$ respectively.
	
	
	\subsection{MILP relaxation-based lower bound}
	These bounds are obtained by running \mFA, \mFB, and \mFC\ on a solver for a fixed time limit and retrieving the best lower bound found so far. The resulting bounds are denoted $LB_5$, $LB_6$, and $LB_7$ respectively.
	
	\section{Genetic algorithm} \label{sec:GA}
	Since their introduction by~\citet{Holland1975}, Genetic Algorithms (GAs) have been widely used to solve NP-hard combinatorial optimization problems, particularly those with many local optima scattered across the search space. GAs operate on a population of candidate solutions, evolving them through genetic operations—selection, crossover, mutation, and replacement. Each solution is encoded as a chromosome and evaluated by a problem-specific fitness function. In each iteration, selected parents generate offspring through recombination (crossover and mutation), and a replacement mechanism updates the population.  
	
	We present in this section a GA for the problem $Pm |ConfG = (V,E) | \sum C_j$. The general framework of the algorithm is outlined in Algorithm~\ref{GAalgo} (inspired by~\cite{prins1994}) and consists of several key procedures: initial population generation (Section~\ref{sec:initialpopulation}); selection and replacement mechanisms (Section~\ref{sec:selectionreplacement}); mutation and crossover operators (Section~\ref{sec:crossovermutation}); and fitness evaluation (Section~\ref{schedulebuilders}), where the fitness of a chromosome $X$ is its mean flow time, denoted $\sum C_j(X)$. The algorithm maintains a diverse population of solutions with varying mean flow times. The function \texttt{Rand} generates uniformly distributed random numbers in the interval $[0,1]$. 
	
	The algorithm runs until one of the following conditions is met: (i) \( I_{\max} \) iterations are completed, (ii) a chromosome with a mean flow time equal to the best lower bound is found (an optimal solution is found), or (iii) the maximum number of iterations without improvement, \( I_{\text{noimpv}} \), is reached. The values of \( I_{\max} \) and \( I_{\text{noimpv}} \) will be determined based on the experiments in Section~\ref{sec:experiments}.

	\begin{algorithm}[H]
		\caption{General framework of the GA.}\label{GAalgo}
		\begin{algorithmic}[1]
			\State Generate an initial population;
			\While{none of the stopping criterion is met}
			\State Select two parents $P_1$ and $P_2$;
			\State Apply a crossover operator to $P_1$ and $P_2$;
			\State Randomly select one offspring $K$;
			\If{\texttt{Rand}$<p_m$}
			\State $L:=K$,
			\State Mutate $L$ using a mutation operator;
			\If {$\sum C_j(L)$ is not redundant} 
			\State $K:=L$;
			\EndIf
			\EndIf
			\If{$\sum C_j(K)$ is not redundant}
			\State Select a chromosome $R$ to be replaced;
			\State Replace $R$ by $K$;
			\EndIf
			\EndWhile
		\end{algorithmic}
		\hspace*{\algorithmicindent} \textbf{Outputs}: the best $\sum C_j$ with the corresponding schedule.
	\end{algorithm}
	\subsection{Encoding and decoding}\label{schedulebuilders}
	We use a permutation-based representation to encode solutions, where each chromosome is represented as a permutation $\pi$ of the jobs. This encoding guarantees the feasibility of chromosomes throughout the evolution process. To evaluate a chromosome, we decode the corresponding permutation using the algorithms that we will present in this section. The mean flow time obtained is then the fitness value of the chromosome.
	
	Algorithms~\ref{evaluation:fifo},~\ref{evaluation:GT}, and~\ref{evaluation:ectf} generate active schedules, where no job can start earlier without rendering the problem infeasible. In contrast, Algorithm~\ref{evaluation:ND} produces non-delay schedules, in which no machine remains idle if a job that can be processed on it is available~\cite{MP08}. Note that restricting the search to non-delay schedules does not guarantee the presence of an optimal schedule, whereas the set of active schedules always includes at least one optimal schedule.
	
	\begin{algorithm}[H]
		\caption{Build active schedules by first in first out}\label{evaluation:fifo} 
		\hspace*{\algorithmicindent} \textbf{Inputs}: permutation $\pi$ of jobs,  processing times $(p_{j})_{1 \leq j \leq n}$, adjacency matrix $A$ of $G$  
		\begin{algorithmic}[1]
			\State Initialize the earliest starting times $(s_{j}^i)_{1\leq i \leq m, 1 \leq j \leq n}$ to zero;
			\While {$\pi \neq \emptyset $}
			\State Select the first job $J_j$ of $\pi$ and scheduled it at $s_j$;
			\State $C_{j}=s_{j}+p_{j}$;
			\State $\pi=\pi \setminus \{J_{j}\}$;
			\State Update the earliest starting times of the remaining jobs that conflict with $J_j$ or correspond to the machine on which $J_j$ has been scheduled;
			\EndWhile      
		\end{algorithmic}
		\hspace*{\algorithmicindent} \textbf{Outputs}: $\sum_{j=1}^{n}C_{j}$.
	\end{algorithm}
Let $s_{j}^i$ be the earliest time that job $J_j$ can start on machine $M_i$, and let $s_{j} =\min_{i=0}^m s_{j}^i$ denote its overall earliest start time. In Algorithm~\ref{evaluation:fifo}, jobs are inserted sequentially according to the order in $\pi$, with each job starting as early as possible while respecting the conflict constraints. Clearly, the resulting schedule is active. Once a job is scheduled, the start times of the remaining jobs that either share the same machine or conflict with the inserted job, and whose intervals overlap with the scheduled job, are updated, while considering the conflict constraints with the already scheduled jobs.
	
	\begin{algorithm}[H]
		\caption{Build active schedules by Giffler and Thompson mechanism~\cite{GT1960}}\label{evaluation:GT} 
		\hspace*{\algorithmicindent} \textbf{Inputs}:  permutation $\pi$ of jobs,  processing times $(p_{j})_{1 \leq j \leq n}$, adjacency matrix $A$ of $G$  
		\begin{algorithmic}[1]
			\State Initialize the earliest starting times $(s_{j}^i)_{1\leq i \leq m, 1 \leq j \leq n}$ to zero;
			\While {$\pi \neq \emptyset $} 
			\State Select a job $J_{j'}$ of $\pi$ with the minimum earliest completion time $s_{j'}+p_{j'}$; \label{ctactive1alg2}
			\State Select the first job $J_{j}$ of $\pi$ that is in conflict with $J_{j'}$ (including $J_{j'}$) such that $s_{j} < s_{j'}+p_{j'}$; \label{ctactive2} \vspace*{-0.5cm}
			\State $C_{j}=s_{j}+p_{j}$;
			\State $\pi=\pi \setminus \{J_{j}\}$;
			\For{ $(i'',j'') \in (M \times \pi)$ such that  $J_{j''}$ is in conflict with $J_{j}$ or $J_{j}$ is scheduled on $M_{i''}$} \label{feasibilityalg2}
			\If{$s_{j''}^{i''} < s_{j}+p_{j}$ } \label{step:updatesj1}
			\State $s_{j''}^{i''}=s_{j}+p_{j}$; 
			\State Update $s_{j''}$;  \label{step:updatesj2}
			\EndIf
			\EndFor
			\EndWhile      
		\end{algorithmic}
		\hspace*{\algorithmicindent} \textbf{Outputs}: $\sum_{j=1}^{n}C_{j}$.
	\end{algorithm}
Algorithm~\ref{evaluation:GT} generates active schedules following the procedure of~\citet{GT1960}. This algorithm selects at each iteration the first job in $\pi$ with the earliest completion time, denoted as $J_{j'}$. Then, the first job of $\pi$ that conflicts with $J_{j'}$ (including $J_{j'}$ itself) and has a starting time earlier than the expected completion time of $J_{j'}$ is inserted into the schedule. After that, the starting times on the machine processing the inserted job, as well as the starting times of the jobs in conflict with the inserted job, are updated as shown in Steps~\ref{step:updatesj1}-\ref{step:updatesj2} of Algorithm~\ref{evaluation:GT}.

A similar algorithm for the open shop with conflict graph was proposed in~\cite{tellache2023genetic}, where Lemma 1 shows that the produced schedule is active. This proof can be easily adapted to demonstrate that Algorithm~\ref{evaluation:GT} also generates active schedules.
	\begin{algorithm}[H]
	\caption{Build active schedules by earliest completion time first}\label{evaluation:ectf} 
	\hspace*{\algorithmicindent} \textbf{Inputs}:  permutation $\pi$ of jobs,  processing times $(p_{j})_{1 \leq j \leq n}$, adjacency matrix $A$ of $G$  
	\begin{algorithmic}[1]
		\State Initialize the earliest starting times $(s_{j}^i)_{1\leq i \leq m, 1 \leq j \leq n}$ to zero;
		\While {$\pi \neq \emptyset $} 
		\State Select the first job $J_{j}$ of $\pi$ with the minimum earliest completion time $s_{j'}+p_{j'}$; \label{ctactive1}
		\State $C_{j}=s_{j}+p_{j}$;
		\State $\pi=\pi \setminus \{J_{j}\}$;
		\For{ $(i',j') \in (M \times \pi)$ such that  $J_{j'}$ is in conflict with $J_{j}$ or $J_{j}$ is scheduled on $M_{i'}$} \label{feasibility}
		\If{$s_{j'}^{i'} < s_{j}+p_{j}$ }
		\State $s_{j'}^{i'}=s_{j}+p_{j}$;
		\State Update $s_{j'}$;  
		\EndIf
		\EndFor
		\EndWhile      
	\end{algorithmic}
	\hspace*{\algorithmicindent} \textbf{Outputs}: $\sum_{j=1}^{n}C_{j}$.
\end{algorithm}

	Algorithm~\ref{evaluation:ectf} generates active schedules similar to Algorithm~\ref{evaluation:GT}, except that at each iteration, it schedules the first job of $\pi$ with the minimum earliest completion time. We show in the following that Algorithm~\ref{evaluation:ectf} produces active schedules.
	
	\begin{lemma}
		Algorithm~\ref{evaluation:ectf} produces feasible schedules that are active.
	\end{lemma}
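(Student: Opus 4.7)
The plan is to treat feasibility and activeness separately. Feasibility will follow by a routine induction on the iteration count of Algorithm~\ref{evaluation:ectf}: the update rule guarantees that when $J_j$ is placed at $s_j=s_j^{i^*}$ on machine $M_{i^*}$, the value $s_j^{i^*}$ is already at least as large as the completion time of every previously scheduled job on $M_{i^*}$ and of every previously scheduled job in conflict with $J_j$, so no machine overlap or conflict violation is introduced.

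For activeness I would argue by contradiction. Suppose $J_j$, scheduled at $[s_j,s_j+p_j]$ on $M_{i^*}$, could instead be placed at $[s_j',s_j'+p_j]$ on some machine $M_{i'}$ with $s_j'<s_j$, with every other job held fixed. At $J_j$'s iteration we have $s_j^{i'}\ge s_j>s_j'$, whereas $s_j^{i'}$ is initialized to $0\le s_j'$ and only grows through $\max$-updates. Hence there must be a critical earlier iteration $T_R$ at which $s_j^{i'}$ first crossed the threshold $s_j'$. Let $J_{k_R}$ be the job scheduled at $T_R$; the triggering condition of the update forces $J_{k_R}$ to be either on $M_{i'}$ or in conflict with $J_j$, and the crossing itself gives $s_{k_R}+p_{k_R}>s_j'$ while the pre-update value satisfies $s_j^{i',T_R}\le s_j'$.

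The decisive step is then to invoke the selection rule of Algorithm~\ref{evaluation:ectf}. At the start of iteration $T_R$, $J_j$ is still in $\pi$ and its earliest completion is at most $s_j^{i',T_R}+p_j\le s_j'+p_j$, yet the algorithm picked $J_{k_R}$ as the first job in $\pi$ with minimum earliest completion, forcing $s_{k_R}+p_{k_R}\le s_j'+p_j$. Combined with $s_{k_R}+p_{k_R}>s_j'$, this squeezes $s_{k_R}+p_{k_R}$ into $(s_j',s_j'+p_j]$ and, assuming positive processing times, yields $s_{k_R}<s_j'+p_j$, so the time intervals $[s_{k_R},s_{k_R}+p_{k_R}]$ and $[s_j',s_j'+p_j]$ must overlap. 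If $J_{k_R}$ sits on $M_{i'}$, this creates a machine clash in the shifted schedule; if instead $J_{k_R}$ conflicts with $J_j$, the conflict constraint is violated. Either case contradicts the assumption that the shifted placement is feasible.

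The main delicate point will be locating the critical iteration $T_R$ and correctly reading $s_j^{i'}$ as a running maximum over the blockers that accumulated before $J_j$'s turn; once this bookkeeping is in place, the selection rule closes the argument in essentially one inequality, and feasibility is the easy part.
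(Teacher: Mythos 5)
Your proof is correct, and its core engine is the same as the paper's: feasibility comes from the update loop in Step~\ref{feasibility}, and activeness comes from the fact that the job selected in Step~\ref{ctactive1} has the minimum earliest completion time among all remaining jobs, so anything whose start gets pushed back would have overlapped it. Where you genuinely differ is in the framing of the activeness half. The paper argues forward, update by update: a job whose earliest start was just increased cannot be restarted at its \emph{previous} earliest start time without clashing with the job that triggered the update. Strictly speaking that only rules out one particular earlier start on particular machines, not every earlier start $s_j'<s_j$ on every machine $M_{i'}$, which is what activeness demands. Your contradiction argument closes exactly that gap: by locating the first iteration at which the running value $s_j^{i'}$ crosses an arbitrary target $s_j'$, you exhibit a concrete already-placed blocker $J_{k_R}$ (on $M_{i'}$ or in conflict with $J_j$) whose interval provably straddles $(s_j',\,s_j'+p_j)$, using the selection rule once in the inequality $s_{k_R}+p_{k_R}\le s_j+p_j\le s_j'+p_j$. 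So the two proofs buy the same conclusion from the same key observation, but yours is the tighter, fully quantified version; the only hypothesis you should state explicitly is $p_j>0$ for all jobs, which you already flag and which is needed for the interval-overlap step.
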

	\begin{proof}
		The for loop in Step~\ref{feasibility} ensures that each remaining job in $\pi$ starts no earlier than the completion time of all conflicting jobs that have already been scheduled, thereby guaranteeing the feasibility of the schedule.
		
		In Step~\ref{ctactive1}, the selected job $J_j$ has the earliest completion time among all remaining jobs. This implies that all other jobs have either the same or a later earliest completion time. In the for loop of Step~\ref{feasibility}, we update the earliest starting times that are less than the completion time of $J_j$, setting them to be equal to the completion time of $J_j$ if either (a) the job is assigned to the same machine as $J_j$ or (b) it conflicts with $J_j$.
		
		The jobs whose earliest starting times were modified cannot start earlier at their previous times without making the schedule infeasible. This is because these jobs initially had starting times before the completion of $J_j$ but finished afterward. Scheduling them at their previous starting times would either (1) place them on the same machine as $J_j$ at the same time or (2) assign them to different machines but process them simultaneously with $J_j$, which is infeasible due to conflicts. Thus, the produced schedule is active.
		
	\end{proof}
	Algorithm~\ref{evaluation:ND} generates non-delay schedules by ensuring that no machine remains idle when a job is available to start on it. At each iteration, the algorithm selects the first job in $\pi$ with the earliest starting time. Then, it updates the starting times on the machine processing the selected job, as well as those in conflict with the selected job, setting their starting times to the completion time of the selected job if their current starting times are earlier. We can show that the produced schedule is non-delay; the proof follows directly from Lemma 2 of~\cite{tellache2023genetic} and can be easily adapted.
	
	\begin{algorithm}[H]
		\caption{Build non-delay schedules}\label{evaluation:ND} 
		\hspace*{\algorithmicindent} \textbf{Inputs}:  permutation $\pi$ of jobs, processing times $(p_{j})_{1 \leq j \leq n}$, adjacency matrix $A$ of $G$ 
		\begin{algorithmic}[1]
			\State Initialize the earliest starting times $(s_{j}^i)_{1\leq i \leq m, 1 \leq j \leq n}$  to zero;
			\While {$\pi \neq \emptyset $} 
			\State Select the first job $J_{j}$ of $\pi$ with the minimum earliest starting time $s_{j}$;\label{CTnondelay}
			\State $C_{j}=s_{j}+p_{j}$;
			\State $\pi=\pi\setminus \{J_{j}\}$;
			\For{ $(i',j') \in (M \times \pi)$ such that  $J_{j'}$ is in conflict with $J_{j}$ or $J_{j}$ is scheduled on $M_{i'}$} \label{Fornondelay}
			\If{$s_{j'}^{i'} < s_{j}+p_{j}$ }
			\State $s_{j'}^{i'}=s_{j}+p_{j}$;
			\State Update $s_{j'}$; 
			\EndIf
			\EndFor
			\EndWhile
		\end{algorithmic}
		\hspace*{\algorithmicindent} \textbf{Outputs}: $\sum_{j=1}^{n}C_{j}$.
	\end{algorithm}
	\subsection{Generation of the initial population}\label{sec:initialpopulation}
	The initial population is either generated entirely at random or seeded with chromosomes constructed by sorting jobs according to eight different rules. These include sorting in decreasing or increasing order of processing time $p_{j}$, conflict degree $c_{j}$ (the number of jobs that cannot be processed in parallel with job $J_{j}$), the ratio $c_{j}/p_{j}$, and the ratio $a_{j}/p_{j}$, where $a_{j}$ represents the agreement degree of $J_{j}$ (the number of jobs that can be processed in parallel with $J_{j}$).
	
	As shown in Algorithm~\ref{GAalgo}, we maintain a population where each chromosome has a distinct fitness value. When a new chromosome is generated, its fitness value is checked against those already present in the population. If a duplicate is detected, we attempt up to $T_{max}$ times to generate a chromosome with a unique fitness value. The goal is to construct a population of size $N_p$; however, if after  $T_{max}$ attempts a chromosome with a different fitness value cannot be obtained, the final population size is set to the number of successfully generated chromosomes with distinct fitness value.
	
	\subsection{Crossover and mutation operators}\label{sec:crossovermutation}
	The crossover operator is essential for effectively exploring the search space. When two parent chromosomes are selected, crossover generates two offspring by exchanging genetic information between them. We consider three crossover operators: one-point crossover (\texttt{X1}), introduced by \citet{Holland1975} and later adapted for chromosome permutations by \citet{REEVES1995}; order crossover (\texttt{OX}) \cite{Goldberg1989}; and linear order crossover (\texttt{LOX}) \cite{Falkenauer1991}.
	
	In \texttt{X1}, a random cut point divides the parents, and subsequences are swapped to produce two offspring. In \texttt{LOX}, a random subsequence from one parent is directly copied to the child while maintaining its position, and the remaining jobs are inserted in empty slots in the order they appear in the second parent, scanned from left to right. \texttt{OX} follows a similar approach as \texttt{LOX} but inserts the remaining jobs after the copied subsequence, handling chromosomes as cyclic sequences.
	
	Mutation helps maintain diversity within the population and operates on a single chromosome with a probability $p_m$. We consider two mutation operators: \texttt{swap} \cite{REEVES1995} and \texttt{move} \cite{REEVES1995}. In both cases, two positions are selected at random. In the \texttt{swap} operator, the jobs at these positions are exchanged, whereas in the \texttt{move} mutation, the job from one position is moved to the other.
	
	\subsection{Selection and replacement mechanisms}\label{sec:selectionreplacement}
	Chromosomes in the current population are selected for reproduction and replacement based on their fitness values, using the ranking mechanism outlined by \citet{REEVES1995}. The population is sorted in descending order of fitness, with the chromosome in position $k$ assigned rank $k$. To select the two parents for crossover, the first parent is chosen with probability $2k/(N_p(N_p+1))$, where $k$ is the rank of the chromosome, and the second parent is selected with probability $1/N_p$. One of the two offspring is then selected at random and may undergo mutation with probability $p_m$. Finally, the chromosome to be replaced is randomly selected from those with ranks below the median, $\lfloor N_p/2 \rfloor $.
	
	\subsection{Hybridization with Local Search}\label{sec:LS}
	To further improve the performance of the GA, we apply a local search procedure to each individual in the final population. In addition to the \texttt{move} and \texttt{swap} neighborhood operators, we introduce two additional operators: \texttt{Or-Opt}~\cite{Ilhan1976}, which is similar to \texttt{move} but inserts two adjacent jobs instead of one, and \texttt{2-Opt}~\cite{croes1958}, which resembles \texttt{swap} but also reverses the order of the jobs between the two swapped positions. For each modified chromosome, both Algorithm~\ref{evaluation:ectf} and Algorithm~\ref{evaluation:ND} are used to evaluate the solution, and the lowest mean flow time is retained.

	\section{Computational experiments} \label{sec:experiments}
	The above mathematical models, lower bounds, and GAs have been coded in C++ and tested on a server with 775 GiB of RAM and 64 cores at 2.25 GHz. The mathematical models were solved by calling Gurobi 11.02.
	
	The experiments were conducted using instances derived from the benchmark set in~\citet{kowalczyk2018branch}. These instances were randomly generated with \(n \in \{20, 50, 100, 150\}\) and \(m \in \{3, 5, 8, 10, 12\}\). For each combination of \(n\) and \(m\), 20 instances were generated for six different classes. In Class 1, \( p_j \sim U[1, 10] \), in Class 2, \( p_j \sim U[1, 100] \), in Class 3, \( p_j \sim U[10, 20] \), and in Classes 4 and 5, \( p_j \sim U[90, 100] \). Finally, in Class 6, \( p_j \sim U[10, 100] \). In total, this resulted in 2400 instances.
	
	For each of these instances, we generated conflict graphs using the $G(n,p)$ method introduced by \citet{ER59}. In this method, given $n$ vertices, a simple undirected graph is generated where each of the $C_2^n$ possible edges is present with a probability of $p$. For every instance in the benchmark set of~\citet{kowalczyk2018branch}, we generated $5$ conflict graphs for a given value of $p$. To represent graphs with low, medium, and high densities, we considered three values of $p$ from the set $\{0.2, 0.5, 0.8\}$. In total, this resulted in 36,000 instances. These instances, along with the best known lower and upper bounds (or the optimal mean flow time, if proven), will be made publicly available on the first author's website.
	
	\begin{table}[htbp] 
		\centering
		\caption{\label{results:LB20} Comparison between the lower bounds on the instances derived from~\citet{kowalczyk2018branch} instances with $n=20$.}
		
		\rotatebox{90}{
				\centering
				\setlength{\tabcolsep}{2pt}
			 \resizebox{0.9\textheight}{!}{	\begin{threeparttable}
					\begin{tabular}{lllrrrrrrrrrrrrrrrrrrrrrrrr}
						\toprule 
						&&&\multicolumn{3}{c}{$ LB_{1} $}&\multicolumn{3}{c}{$ LB_{2} $}&\multicolumn{3}{c}{$ LB_{3} $}&\multicolumn{3}{c}{$ LB_{4} $}&\multicolumn{4}{c}{$ LB_{5} $}
						&\multicolumn{4}{c}{$ LB_{6} $}
						&\multicolumn{4}{c}{$ LB_{7} $}\\
						\cline{4-6}\cline{7-9}\cline{10-12}\cline{13-15}\cline{16-19} \cline{20-23} \cline{24-27}
						$p$&$n$&$m$&$b$&$c$& $d$&$b$&$c$&$d$&$b$&$c$&$d$&$b$&$c$&$d$&$a$&$b$&$c$&$d$&$a$&$b$&$c$&$d$&$a$&$b$&$c$&$d$\\
						\midrule
						$0.2$&20&3&11.333  &   0.398  &   0.000  &   0.000  &   89.658  &   0.000  &   0.000  &   89.541  &   0.000  &   0.000  &   89.791  &   0.000  &   0  &   \textbf{100.000}  &  \textbf{0.000}  &   212.123  &   36  &   0.000  &   59.282  &   910.168  &   0  &   0.000  &   32.641  &   900.158\\
						&&5&0.000  &   2.887  &   0.000  &   0.000  &   85.065  &   0.000  &   0.000  &   84.949  &   0.000  &   0.000  &   85.141  &   0.000  &   0  &   \textbf{91.167}  &  \textbf{ 0.168}  &   416.613  &   35  &   6.333  &   42.429  &   877.440  &   0  &   18.667  &   7.802  &   775.459\\
						&&8&0.000  &   12.979  &   0.000  &   0.000  &   81.542  &   0.000  &   0.000  &   81.404  &   0.000  &   0.000  &   82.051  &   0.000  &   3  &   58.500  &   1.761  &   519.819  &   73  &   31.833  &   33.883  &   691.964  &   0  &   \textbf{98.667}  &  \textbf{ 0.026}  &   17.299\\
						&&10&0.000  &   23.468  &   0.000  &   0.000  &   81.643  &   0.000  &   0.000  &   81.397  &   0.000  &   0.000  &   81.856  &   0.000  &   15  &   58.000  &   5.383  &   517.075  &   68  &   34.667  &   34.550  &   671.850  &   0  &   \textbf{100.000}  &   \textbf{0.000}  &   1.051\\
						&&12&0.000  &   28.404  &   0.000  &   0.000  &   81.428  &   0.000  &   0.000  &   81.207  &   0.000  &   0.000  &   81.553  &   0.000  &   40  &   54.000  &   7.832  &   543.165  &   72  &   34.333  &   34.015  &   665.696  &   0  &   \textbf{100.000}  &   \textbf{0.000} &   0.913\\
						\midrule
						$0.5$&20&3& 28.833  &   \textbf{2.517}  &   0.000  &   0.000  &   76.733  &   0.000  &   0.000  &   76.572  &   0.000  &   0.000  &   77.856  &   0.000  &   0  &   \textbf{55.500}  &   39.479  &   755.983  &   21  &   0.000  &   57.292  &   911.183  &   0  &   20.500  &   14.903  &   830.334\\
						&&5&0.000  &   22.742  &   0.000  &   0.000  &   72.543  &   0.000  &   0.000  &   72.193  &   0.000  &   0.000  &   73.437  &   0.000  &   43  &   17.333  &   38.585  &   780.163  &   49  &   1.667  &   54.385  &   905.712  &   0  &   \textbf{99.167}  &   \textbf{0.012}  &   182.287\\
						&&8&0.000  &   42.979  &   0.000  &   0.000  &   72.245  &   0.000  &   0.000  &   72.007  &   0.000  &   0.000  &   73.527  &   0.000  &   201  &   17.167  &   19.932  &   784.805  &   48  &   3.500  &   55.114  &   900.596  &   0  &   \textbf{100.000}  &   \textbf{0.000}  &   126.281\\
						&&10&0.000  &   50.564  &   0.000  &   0.000  &   72.607  &   0.000  &   0.000  &   72.317  &   0.000  &   0.000  &   73.801  &   0.000  &   223  &   17.000  &   18.754  &   791.404  &   58  &   5.667  &   53.560  &   893.657  &   0  &   \textbf{99.833}  &  \textbf{0.004}  &   114.437\\
						&&12&0.000  &   54.379  &   0.000  &   0.000  &   73.067  &   0.000  &   0.000  &   72.711  &   0.000  &   0.000  &   73.700  &   0.000  &   241  &   15.833  &   20.006  &   811.469  &   46  &   3.333  &   55.072  &   904.452  &   0  &   \textbf{99.833}  &   \textbf{0.005}  &   119.336\\ 
						\midrule
						$0.8$&20&3&16.667  &   12.128  &   0.000  &   0.167  &   42.669  &   0.000  &   0.333  &   42.540  &   0.000  &   0.333  &   44.834  &   0.000  &   2  &   38.667  &   51.916  &   873.256  &   23  &   0.000  &   62.926  &   911.479  &   0  &   \textbf{44.667}  &   \textbf{7.857}  &   888.512\\
						&&5&0.000  &   41.201  &   0.000  &   0.000  &   43.476  &   0.000  &   0.000  &   43.321  &   0.000  &   0.000  &   45.144  &   0.000  &   167  &   32.333  &   37.866  &   884.198  &   16  &   0.000  &   64.369  &   909.619  &   0  &   \textbf{68.500}  &   \textbf{5.836}  &   882.421\\
						&&8&0.000  &   57.228  &   0.000  &   0.500  &   42.652  &   0.000  &   0.333  &   42.523  &   0.000  &   0.333  &   44.653  &   0.000  &   251  &   31.167  &   25.353  &   893.396  &   20  &   0.000  &   64.939  &   911.635  &   0  &   \textbf{68.833}  &   \textbf{5.712}  &   878.103\\
						&&10&0.000  &   63.095  &   0.000  &   0.000  &   43.290  &   0.000  &   0.000  &   43.343  &   0.000  &   0.000  &   45.718  &   0.000  &   292  &   32.333  &   18.930  &   893.164  &   18  &   0.000  &   63.645  &   914.115  &   0  &   \textbf{68.667}  &   \textbf{5.198}  &   872.294\\
						&&12&0.000  &   65.488  &   0.000  &   0.167  &   42.386  &   0.000  &   0.167  &   42.466  &   0.000  &   0.000  &   44.380  &   0.000  &   357  &   31.000  &   11.715  &   897.780  &   12  &   0.000  &   64.486  &   910.405  &   0  &   \textbf{69.000}  &   \textbf{5.359}  &   878.853\\
						\bottomrule
					\end{tabular}
				\begin{tablenotes}
					\item $a$: Number of instances where Gurobi could not find a lower bound within the time limit.
					\item $b$(\%): Number of times in percentage a given lower bound yields the best bound.
					\item $c$(\%): Average percentage deviation from the best lower bound.
					\item $d$(s): Average CPU time in seconds.
				\end{tablenotes}
				\end{threeparttable}}
		}
	\end{table}

\begin{table}[htbp]
\scriptsize
\setlength{\tabcolsep}{2pt}
\caption{\label{results:LB50100150} Comparison between the lower bounds on the instances derived from~\citet{kowalczyk2018branch} instances with $n=50$, $n=100$, and $n=150$.}
\begin{center}
	\resizebox{\textwidth}{!}{
		\begin{threeparttable}
			\begin{tabular}{lllrrrrrrrrrrrrrrrr}
				\toprule 
				&&&\multicolumn{3}{c}{$ LB_{1} $}&\multicolumn{3}{c}{$ LB_{2} $}&\multicolumn{3}{c}{$ LB_{3} $}&\multicolumn{3}{c}{$ LB_{4} $}
				&\multicolumn{4}{c}{$ LB_{7} $}\\
				\cline{4-6}\cline{7-9}\cline{10-12}\cline{13-15}\cline{16-19} 
				$p$&$n$&$m$&$b$&$c$& $d$&$b$&$c$&$d$&$b$&$c$&$d$&$b$&$c$&$d$&$a$&$b$&$c$&$d$\\
				\midrule
				$0.2$&50&3& 100.000  &   0.000  &   0.000  &   0.000  &   97.196  &   0.000  &   0.000  &   97.139  &   0.000  &   0.000  &   97.273  &   0.000  &   0  &   0.000  &   72.356  &   900.134\\
				&&5&100.000  &   0.000  &   0.000  &   0.000  &   95.570  &   0.000  &   0.000  &   95.443  &   0.000  &   0.000  &   95.677  &   0.000  &   0  &   0.000  &   55.981  &   900.135\\
				&&8&100.000  &   0.000  &   0.000  &   0.000  &   93.565  &   0.000  &   0.000  &   93.464  &   0.000  &   0.000  &   93.779  &   0.000  &   0  &   0.000  &   33.856  &   900.153\\
				&&10&100.000  &   0.000  &   0.000  &   0.000  &   92.074  &   0.000  &   0.000  &   91.860  &   0.000  &   0.000  &   92.242  &   0.000  &   0  &   0.000  &   19.765  &   900.146\\
				&&12&70.833  &   1.072  &   0.000  &   0.000  &   91.328  &   0.000  &   0.000  &   91.087  &   0.000  &   0.000  &   91.550  &   0.000  &   0  &   29.667  &   9.147  &   900.132\\
				\cmidrule{2-19}
				&100&3&100.000  &   0.000  &   0.000  &   0.000  &   99.084  &   0.000  &   0.000  &   99.051  &   0.000  &   0.000  &   99.142  &   0.000  &   0  &   0.000  &   85.189  &   900.176 \\
				&&5&100.000  &   0.000  &   0.000  &   0.000  &   98.534  &   0.000  &   0.000  &   98.454  &   0.000  &   0.000  &   98.555  &   0.000  &   0  &   0.000  &   75.887  &   925.543\\
				&&8&100.000  &   0.000  &   0.000  &   0.000  &   97.697  &   0.000  &   0.000  &   97.658  &   0.000  &   0.000  &   97.821  &   0.000  &   0  &   0.000  &   62.835  &   900.181 \\
				&&10&100.000  &   0.000  &   0.000  &   0.000  &   97.158  &   0.000  &   0.000  &   97.068  &   0.000  &   0.000  &   97.337  &   0.000  &   0  &   0.000  &   54.639  &   900.171 \\
				&&12&100.000  &   0.000  &   0.000  &   0.000  &   96.720  &   0.000  &   0.000  &   96.624  &   0.000  &   0.000  &   96.910  &   0.000  &   0  &   0.000  &   46.784  &   900.106\\
				\cmidrule{2-19}
				&150&3& 100.000  &   0.000  &   0.000  &   0.000  &   99.524  &   0.000  &   0.000  &   99.510  &   0.000  &   0.000  &   99.562  &   0.000  &   0  &   0.000  &   89.955  &   900.214\\
				&&5&100.000  &   0.000  &   0.000  &   0.000  &   99.227  &   0.000  &   0.000  &   99.195  &   0.000  &   0.000  &   99.286  &   0.000  &   0  &   0.000  &   83.523  &   900.188\\
				&&8& 100.000  &   0.000  &   0.000  &   0.000  &   98.793  &   0.000  &   0.000  &   98.744  &   0.000  &   0.000  &   98.897  &   0.000  &   0  &   0.000  &   74.259  &   900.179\\
				&&10& 100.000  &   0.000  &   0.000  &   0.000  &   98.512  &   0.000  &   0.000  &   98.464  &   0.000  &   0.000  &   98.637  &   0.000  &   0  &   0.000  &   68.329  &   900.264\\
				&&12&100.000  &   0.000  &   0.000  &   0.000  &   98.256  &   0.000  &   0.000  &   98.191  &   0.000  &   0.000  &   98.379  &   0.000  &   0  &   0.000  &   62.454  &   900.111\\
				
				\midrule
				$0.5$&50&3&100.000  &   0.000  &   0.000  &   0.000  &   92.409  &   0.000  &   0.000  &   92.240  &   0.000  &   0.000  &   92.970  &   0.000  &   0  &   0.000  &   67.384  &   900.144\\
				&&5&100.000  &   0.000  &   0.000  &   0.000  &   87.854  &   0.000  &   0.000  &   87.627  &   0.000  &   0.000  &   88.802  &   0.000  &   0  &   0.000  &   47.572  &   900.133\\
				&&8&100.000  &   0.000  &   0.000  &   0.000  &   82.144  &   0.000  &   0.000  &   81.780  &   0.000  &   0.000  &   83.323  &   0.000  &   0  &   0.000  &   21.747  &   900.145\\
				&&10&85.667  &   0.249  &   0.000  &   0.000  &   78.226  &   0.000  &   0.000  &   77.872  &   0.000  &   0.000  &   79.962  &   0.000  &   0  &   14.500  &   4.985  &   900.510\\
				&&12&3.500  &   7.690  &   0.000  &   0.000  &   76.737  &   0.000  &   0.000  &   76.503  &   0.000  &   0.000  &   79.105  &   0.000  &   0  &   96.667  &   0.047  &   900.152\\
				\cmidrule{2-19}
				&100&3& 100.000  &   0.000  &   0.000  &   0.000  &   97.204  &   0.000  &   0.000  &   97.152  &   0.000  &   0.000  &   97.504  &   0.000  &   0  &   0.000  &   81.533  &   900.180\\
				&&5& 100.000  &   0.000  &   0.000  &   0.000  &   95.408  &   0.000  &   0.000  &   95.333  &   0.000  &   0.000  &   95.904  &   0.000  &   0  &   0.000  &   69.979  &   1031.917\\
				&&8&100.000  &   0.000  &   0.000  &   0.000  &   93.037  &   0.000  &   0.000  &   92.854  &   0.000  &   0.000  &   93.802  &   0.000  &   0  &   0.000  &   53.802  &   900.188\\
				&&10& 100.000  &   0.000  &   0.000  &   0.000  &   91.480  &   0.000  &   0.000  &   91.200  &   0.000  &   0.000  &   92.401  &   0.000  &   0  &   0.000  &   43.495  &   1383.514 \\
				&&12&100.000  &   0.000  &   0.000  &   0.000  &   90.061  &   0.000  &   0.000  &   89.744  &   0.000  &   0.000  &   90.917  &   0.000  &   0  &   0.000  &   33.726  &   900.103\\
				\cmidrule{2-19}
				&150&3& 100.000  &   0.000  &   0.000  &   0.000  &   98.515  &   0.000  &   0.000  &   98.463  &   0.000  &   0.000  &   98.703  &   0.000  &   0  &   0.000  &   87.023  &   900.271\\
				&&5&100.000  &   0.000  &   0.000  &   0.000  &   97.538  &   0.000  &   0.000  &   97.448  &   0.000  &   0.000  &   97.866  &   0.000  &   0  &   0.000  &   78.720  &   900.209\\
				&&8&100.000  &   0.000  &   0.000  &   0.000  &   96.144  &   0.000  &   0.000  &   96.017  &   0.000  &   0.000  &   96.716  &   0.000  &   0  &   0.000  &   66.733  &   900.232\\
				&&10& 100.000  &   0.000  &   0.000  &   0.000  &   95.268  &   0.000  &   0.000  &   95.117  &   0.000  &   0.000  &   95.961  &   0.000  &   0  &   0.000  &   58.843  &   900.216\\
				&&12& 100.000  &   0.000  &   0.000  &   0.000  &   94.438  &   0.000  &   0.000  &   94.225  &   0.000  &   0.000  &   95.159  &   0.000  &   0  &   0.000  &   51.450  &   900.204\\
				
				\midrule
				$0.8$&50&3&100.000  &   0.000  &   0.000  &   0.000  &   72.347  &   0.000  &   0.000  &   72.099  &   0.000  &   0.000  &   74.328  &   0.000  &   0  &   0.000  &   57.501  &   900.133\\
				&&5&100.000  &   0.000  &   0.000  &   0.000  &   56.030  &   0.000  &   0.000  &   55.659  &   0.000  &   0.000  &   58.829  &   0.000  &   0  &   0.000  &   32.395  &   900.118\\
				&&8&49.667  &   7.923  &   0.000  &   0.500  &   39.097  &   0.000  &   0.167  &   38.794  &   0.000  &   0.167  &   43.381  &   0.000  &   0  &   49.833  &   8.179  &   900.128\\
				&&10&21.333  &   16.681  &   0.000  &   5.667  &   33.989  &   0.000  &   6.167  &   33.708  &   0.000  &   0.833  &   39.020  &   0.000  &   0  &   70.667  &   1.776  &   900.123\\
				&&12&0.000  &   26.540  &   0.000  &   4.833  &   33.310  &   0.000  &   5.500  &   32.462  &   0.000  &   2.500  &   37.973  &   0.000  &   0  &   90.833  &   0.808  &   900.124\\
				\cmidrule{2-19}
				&100&3&100.000  &   0.000  &   0.000  &   0.000  &   87.876  &   0.000  &   0.000  &   87.726  &   0.000  &   0.000  &   89.084  &   0.000  &   0  &   0.000  &   74.898  &   1023.064 \\
				&&5&100.000  &   0.000  &   0.000  &   0.000  &   80.578  &   0.000  &   0.000  &   80.175  &   0.000  &   0.000  &   82.335  &   0.000  &   0  &   0.000  &   59.131  &   1130.042\\
				&&8&100.000  &   0.000  &   0.000  &   0.000  &   69.978  &   0.000  &   0.000  &   69.362  &   0.000  &   0.000  &   72.759  &   0.000  &   0  &   0.000  &   36.645  &   1126.223\\
				&&10&79.333  &   0.499  &   0.000  &   0.000  &   63.407  &   0.000  &   0.000  &   62.720  &   0.000  &   0.000  &   66.938  &   0.000  &   0  &   20.667  &   23.172  &   1127.636\\
				&&12&64.500  &   5.119  &   0.000  &   0.000  &   59.124  &   0.000  &   0.000  &   58.576  &   0.000  &   0.000  &   63.403  &   0.000  &   0  &   35.500  &   15.055  &   900.150\\
				\cmidrule{2-19}
				&150&3& 100.000  &   0.000  &   0.000  &   0.000  &   92.997  &   0.000  &   0.000  &   92.824  &   0.000  &   0.000  &   93.914  &   0.000  &   0  &   0.000  &   81.737  &   900.328\\
				&&5&100.000  &   0.000  &   0.000  &   0.000  &   88.544  &   0.000  &   0.000  &   88.258  &   0.000  &   0.000  &   89.964  &   0.000  &   0  &   0.000  &   69.993  &   900.227\\
				&&8&100.000  &   0.000  &   0.000  &   0.000  &   82.088  &   0.000  &   0.000  &   81.661  &   0.000  &   0.000  &   84.146  &   0.000  &   0  &   0.000  &   53.100  &   900.218\\
				&&10&100.000  &   0.000  &   0.000  &   0.000  &   78.163  &   0.000  &   0.000  &   77.520  &   0.000  &   0.000  &   80.540  &   0.000  &   0  &   0.000  &   42.041  &   900.280 \\
				&&12&100.000  &   0.000  &   0.000  &   0.000  &   73.874  &   0.000  &   0.000  &   73.394  &   0.000  &   0.000  &   77.126  &   0.000  &   0  &   0.000  &   31.798  &   900.333\\
				\bottomrule
			\end{tabular}
			\begin{tablenotes}
				\item $a$: Number of instances where Gurobi could not find a lower bound within the time limit.
				\item $b$(\%): Number of times in percentage a given lower bound yields the best bound.
				\item $c$(\%): Average percentage deviation from the best lower bound.
				\item $d$(s): Average CPU time in seconds.
			\end{tablenotes}
	\end{threeparttable}}
\end{center}
\end{table}
\subsection{Performance of the lower bounds}
This section evaluates the performance of the lower bounds introduced in Section~\ref{sec:lowerbounds}. Results are reported in Table~\ref{results:LB20} for $n = 20$, and in Table~\ref{results:LB50100150} for $n = 50$, $100$, and $150$. The mathematical formulations described in Section~\ref{sec:MILP} are solved with a time limit of 15 minutes per instance, using a warm start. The initial solution for the warm start is generated by Algorithm~\ref{ctactive2}, based on the job permutation $\pi = \{1, 2, \ldots, n\}$. The deviation from the best lower bound is calculated as $((BestLB-LB_i)/BestLB) \times 100$, where $BestLB=\max\{LB_i | i=1,\ldots,7\}$ is the best lower bound. Average values for $LB_5$, $LB_6$, and $LB_7$ are reported based only on instances where Gurobi successfully computed a lower bound within the time limit. For $n \in \{50, 100, 150\}$, the runs corresponding to $LB_5$ and $LB_6$ were terminated early in some cases. As a result, we omit the corresponding results for these lower bounds at these instance sizes.

As shown in Tables~\ref{results:LB20} and~\ref{results:LB50100150}, the lower bounds based on the weighted independent set ($LB_2$, $LB_3$, and $LB_4$) generally yielded the weakest results compared to the other bounds. Nonetheless, for some instances with $p = 0.8$ and $n = 20$, and those with $p = 0.8$, $n = 50$, and $m \geq 8$, these bounds performed better relative to their results on other instance classes. Among the three, $LB_2$ and $LB_3$ are competitive and generally outperform $LB_4$. Recall that $LB_2$ and $LB_3$ are obtained by applying the greedy heuristics GWMIN and GWMIN2~\cite{SMK03}, which follow a similar mechanism: at each step, a vertex is selected according to a specific rule, and both the vertex and its neighbors are removed from the graph. This process is repeated until no vertices remain, and the resulting independent set consists of all the selected vertices. In contrast, $LB_4$ is derived using the GWMAX heuristic~\cite{SMK03}, which follows a different approach. At each iteration, a vertex is deleted based on a specific rule until no edges remain. The independent set in this case is composed of the remaining vertices.

For $n = 20$, the bound $LB_1$, which is based on relaxing the conflict constraints, yields some good results when $m = 3$. However, the bounds derived from the mathematical formulations generally provide the best performance across most instances. Among these, $LB_5$ and $LB_7$ consistently outperform $LB_6$ in the majority of cases. Comparing $LB_5$ and $LB_7$, the former performs better on smaller, low-density instances—particularly for $m = 3, 5$ with $p = 0.2$, and for $m = 3$ with $p = 0.5$. In contrast, $LB_7$ outperforms $LB_5$ on all remaining instances. 

For $n \geq 50$, results are reported only for $LB_7$, as the mathematical formulations corresponding to $LB_5$ and $LB_6$ were terminated on some instances before reaching the time limit. As shown in Table~\ref{results:LB50100150}, $LB_1$ outperforms $LB_7$ in the majority of cases. However, $LB_7$ yields relatively good results for instances with a large number of machines—particularly for $p = 0.2$ with $n = 50$ and $m = 12$; $p = 0.5$ with $n = 50$ and $m = 10$ or $12$; and $p = 0.8$ with $n = 50$ and $m = 8$, $10$, or $12$, as well as with $n = 100$ and $m = 10$ or $12$.

Regarding CPU time, \(LB_1\), \(LB_2\), \(LB_3\), \(LB_4\), and \(LB_5\) are not time-consuming and require less than one second to compute for all instances. In contrast, the bounds derived from the mathematical formulations require more CPU time and often reach the 15-minute time limit. Among the three formulation-based bounds, \(LB_7\) tends to be faster than \(LB_5\) and \(LB_6\), except for some small-size instances where \(LB_5\) performs better—particularly for \(m = 3, 5\) with \(p = 0.2, 0.5\), and for \(m = 3\) with \(p = 0.8\).

Overall, for \(n = 20\), \(LB_7\) provides the best lower bound in 70.422\% of the instances, followed by \(LB_5\), which achieves the best bound in 43.333\% of the instances. In comparison, the remaining bounds—\(LB_1\), \(LB_2\), \(LB_3\), \(LB_4\), and \(LB_6\)—yield the best results in only 3.789\%, 0.056\%, 0.055\%, 0.044\%, and 8.089\% of the instances, respectively. Regarding the average percentage deviation from the best lower bound, \(LB_7\) achieves the smallest deviation at 5.690\%. The other bounds—\(LB_1\), \(LB_2\), \(LB_3\), \(LB_4\), \(LB_5\), and \(LB_6\)—have average deviations of 32.030\%, 66.734\%, 66.566\%, 67.829\%, 19.684\%, and 53.686\%, respectively. For \(n \geq 50\), \(LB_1\) achieves the best lower bound in 90.552\% of the instances, significantly outperforming the other bounds. It is followed by \(LB_7\), which achieves the best bound in 9.074\% of the instances, while the remaining bounds do not provide the best result in any case.

\begin{table}[htbp]
\scriptsize
\setlength{\tabcolsep}{0pt}
\caption{\label{results:UB2050} Comparison between the upper bounds on the instances derived from~\citet{kowalczyk2018branch} instances with $n=20$.}
\begin{center}
\begin{threeparttable}
	\begin{tabular*}{\textwidth}{@{\extracolsep{\fill}}llrrrrrrrrrrrrr}
		\toprule
		
		&&&\multicolumn{4}{c}{\mFA}&\multicolumn{4}{c}{\mFB}&\multicolumn{4}{c}{\mFC}\\
		\cline{4-7}\cline{8-11}\cline{12-15}
		&$n$&$m$&$a$&$b$&$c$&$d$&$a$&$b$&$c$&$d$&$a$&$b$&$c$&$d$\\
		\midrule
		$p=0.2$&20&3& \textbf{98.333}  &   0.012  &   96.833  &   0.032  &   10.833  &   1.078  &   0.000  &   59.770  &   34.833  &   0.180  &   0.000  &   32.815 \\
		&&5&\textbf{85.167}  &   0.340  &   80.333  &   0.589  &   9.500  &   3.543  &   6.333  &   44.194  &   53.833  &   0.176  &   17.333  &   8.088\\
		&&$8$&63.667  &   2.742  &   58.333  &   4.012  &   35.000  &   5.302  &   31.833  &   35.246  &   \textbf{99.833}  &   0.000  &   98.667  &   0.028\\
		&&$10$&62.333  &   3.729  &   58.000  &   7.853  &   39.667  &   5.764  &   34.500  &   35.747  &   \textbf{100.000}  &   0.000  &   100.000  &   0.000\\
		&&$12$&56.833  &   4.230  &   53.833  &   10.235  &   38.333  &   5.372  &   34.333  &   35.131  &   \textbf{100.000}  &   0.000  &   100.000  &   0.000\\		
		\midrule
		$p=0.5$&20&3&33.000  &   3.857  &   21.667  &   41.499  &   1.333  &   4.897  &   0.000  &   60.011  &   \textbf{75.833}  &   0.402  &   14.333  &   16.862\\
		&&5&20.833  &   9.548  &   16.500  &   41.677  &   4.667  &   10.390  &   1.667  &   57.281  &   \textbf{99.833}  &   0.002  &   97.500  &   0.214\\
		&&$8$&22.333  &   9.110  &   17.167  &   22.975  &   8.333  &   9.953  &   3.500  &   57.486  &   \textbf{100.000}  &   0.000  &   99.333  &   0.046\\
		&&$10$&22.000  &   9.718  &   16.833  &   21.994  &   9.000  &   10.652  &   5.500  &   56.213  &   \textbf{100.000}  &   0.000  &   98.833  &   0.096 \\
		&&$12$&19.833  &   9.771  &   15.667  &   23.149  &   6.000  &   10.414  &   3.000  &   57.571  &   \textbf{100.000}  &   0.000  &   99.667  &   0.021\\		  
		\midrule
		$p=0.8$&20&3&18.500  &   8.573  &   6.833  &   58.504  &   2.667  &   9.556  &   0.000  &   70.108  &   \textbf{85.833}  &   0.300  &   2.667  &   21.405\\
		&&5&13.167  &   9.698  &   5.167  &   46.211  &   3.000  &   10.025  &   0.000  &   70.848  &   \textbf{92.000}  &   0.087  &   6.000  &   18.628\\
		&&$8$&12.167  &   9.807  &   2.833  &   35.310  &   1.500  &   10.169  &   0.000  &   71.195  &   \textbf{95.333}  &   0.080  &   6.500  &   18.224\\
		&&$10$&10.333  &   10.041  &   2.833  &   30.317  &   0.833  &   10.126  &   0.000  &   70.248  &   \textbf{96.833}  &   0.035  &   6.500  &   17.583\\
		&&$12$&9.000  &   9.794  &   2.000  &   23.118  &   1.833  &   10.007  &   0.000  &   71.128  &   \textbf{97.500}  &   0.030  &   5.167  &   18.246\\
		\bottomrule
	\end{tabular*}
	\begin{tablenotes}
		\item $a$(\%): Number of times in percentage a given upper bound yields the best bound.
		\item $b$(\%): Average percentage deviation from the best upper bound.
		\item $c$(\%): Number of times in percentage Gurobi completed with status "optimal solution found".
		\item $d$(\%): Average gap reported by Gurobi in percentage.
	\end{tablenotes}
\end{threeparttable}
\end{center}
\end{table}

\begin{table}[htbp]
\scriptsize
\setlength{\tabcolsep}{0pt}
\caption{\label{results:UB100150} Results of \mFC on instances derived from~\citet{kowalczyk2018branch} with $n = 50$, $100$, and $150$}
\begin{center}
\begin{threeparttable}
	\begin{tabular*}{\textwidth}{@{\extracolsep{\fill}}lrrrrrrr}
		\toprule
		
		&&\multicolumn{6}{c}{\mFC}\\
		\cline{3-8}
		&&\multicolumn{2}{c}{$p=0.2$}&\multicolumn{2}{c}{$p=0.5$}&\multicolumn{2}{c}{$p=0.8$}\\
		\cline{3-4}  \cline{5-6}  \cline{7-8}
		$n$&$m$&$a$&$b$\\
		\midrule
		50&3&   0.000  &   72.413 &   0.000 &   68.245&   0.000&   65.406\\
		&5&   0.000  &   56.648&   0.000 &   56.487&   0.000 &   64.203\\
		&$8$ &   0.000  &   38.553&   0.000&   53.926&   0.000&   63.368\\
		&$10$  &   0.000  &   31.218&   0.000&   52.900&   0.000  &   63.294\\
		&$12$ &   0.000  &   27.975&   0.000&   52.968&   0.000&   63.059\\
		\midrule
		100&$3$ &   0.000  &   85.200&   0.000&   81.752&   0.000&   77.922\\
		&$5$  &   0.000  &   76.041&   0.000&   73.178 &   0.000&   76.492\\
		&$8$  &   0.000  &   64.240&   0.000&   70.316&   0.000&   76.248\\
		&$10$  &   0.000  &   58.558&   0.000&   70.218&   0.000 &   76.410\\
		&$12$ &   0.000  &   54.531&   0.000 &   70.101&   0.000&   76.239\\
		\midrule
		150&$3$  &   0.000  &   89.959&   0.000&   87.118&   0.000&   83.265\\
		&$5$  &   0.000  &   83.592 &   0.000&   80.282&   0.000&   81.603\\
		&$8$  &   0.000  &   74.954&   0.000&   77.072&   0.000&   81.578\\
		&$10$  &   0.000  &   70.114&   0.000&   76.829&   0.000&   81.475\\
		&$12$  &   0.000  &   66.312&   0.000&   76.837&   0.000 &   81.522\\
		\bottomrule
	\end{tabular*}
	\begin{tablenotes}
		\item $a$(\%): Number of times in percentage Gurobi completed with status "optimal solution found".
		\item $b$(\%): Average gap reported by Gurobi in percentage.
	\end{tablenotes}
\end{threeparttable}
\end{center}
\end{table}
\subsection{Performance of the mathematical formulations}
In this section, we present the results of running the formulations from Section~\ref{sec:MILP} within a time limit of 15 minutes. The results are summarized in Table~\ref{results:UB2050} for \(n = 20\) and  in Table~\ref{results:UB100150} for \(n = 50, 100, 150\). For the latter group of instances, we report results only for \mFC, as the executions of models \mFA\ and \mFB\ were terminated prematurely for some instances, as discussed in the previous section. The deviation from the best upper bound is calculated as $((UB_i-BestUB)/BestUB) \times 100$, where \(UB_i\) is the best mean flow time found by (F$_i$) and \(BestUB = \min\{UB_i \mid i = 1, \ldots, 3\}\) is the best upper bound. The gap reported by Gurobi is defined as $(|UB-LB|/|UB|) \times 100$, where \(UB\) is the incumbent objective value (i.e., \(UB_i\) of (F$_i$)) and \(LB\) is the lower bound. Since some models fail to find a lower bound, Gurobi could not report a gap for these instances. The number of instances without a gap corresponds to the number of instances without a lower bound reported in Tables~\ref{results:LB20} and ~\ref{results:LB50100150}. The average gap reported by Gurobi is computed only over instances where a gap was found for the corresponding formulation. The average CPU times for the three formulations were already provided in Tables~\ref{results:LB20} and ~\ref{results:LB50100150} for the corresponding lower bounds.

As shown in Table~\ref{results:UB2050}, \mFB\ is outperformed by both \mFA\ and \mFC\ on the majority of instances. Comparing \mFA\ and \mFC, we observe that \mFA\ performs well on smaller instances with low densities—particularly for $p = 0.2$, $n = 20$, and $m = 3$ or $5$—while \mFC\ outperforms \mFA\ across all other instance groups. Additionally, Tables~\ref{results:LB20} and~\ref{results:LB50100150} show that \mFA\ struggles more than the other models to find lower bounds, whereas \mFC\ consistently succeeds in finding a lower bound for every instance. Overall, for $n = 20$, \mFC\ provides the best upper bound in 88.778\% of the instances, compared to 36.500\% for \mFA\ and 11.500\% for \mFB. In terms of deviation from the best upper bound, \mFC\ achieves an average deviation of just 0.086\%, while \mFA\ and \mFB\ show higher deviations at 6.731\% and 7.816\%, respectively.

The results in Tables~\ref{results:UB2050} and~\ref{results:UB100150} also show that as the density of the conflict graph and the number of jobs increase, the number of instances solved to optimality decreases, and the average gap reported by Gurobi increases—except for \mFC\ on some smaller instances with $m = 3$ or $5$. This trend is expected, as increasing these two parameters increases the size of the formulations. Within the time limit of 15 minutes and for $n = 20$, \mFC\ solved 50.167\% of the instances to proven optimality, with an average optimality gap reported by Gurobi of 10.150\%. In comparison, \mFA\ and \mFB\ solved 30.322\% and 8.044\% of the instances, with Gurobi-reported average gaps of 24.498\% and 56.812\%, respectively. For $n \geq 50$, \mFC\ did not solve any instances to proven optimality, and the average optimality gap reported by Gurobi was 69.125\%.

Recall from the previous section that the lower bounds derived from the mathematical formulations were outperformed on some instance classes by other bounds. Therefore, we aim to assess how close the best lower bounds are to the best upper bounds, and whether some upper bounds can be proven optimal based on the best lower bounds. Columns $a$ and $b$ in Tables~\ref{results:GAKLD},~\ref{results:GAKMD}, and~\ref{results:GAKHD} report, respectively, the percentage of instances in which the best lower bound coincides with the best upper bound, and the average gap between the two computed as $((BestUB-BestLB)/BestUB)$. For $n=20$, the best lower and upper bounds coincide in 63.678\% of the instances, and the average gap is 4.561\%, representing an improvement over the results obtained by the best-performing model, \mFC. Across all instances, the best lower and upper bounds coincide in 15.953\% of cases, with an average gap of 22.966\%.


\begin{figure}[htb]
	\begin{center}
		\begin{tikzpicture}
			\begin{axis}[
				xtick={1,2,2.5,3,4,4.5,5,6,7,7.5,8,9,10,11}, 
				xticklabels={random,hybrid,,\texttt{move},\texttt{swap},,\texttt{X1},\texttt{LOX},\texttt{OX},,Algorithm~\ref{evaluation:fifo},Algorithm~\ref{evaluation:GT},Algorithm~\ref{evaluation:ectf},Algorithm~\ref{evaluation:ND}},
				ylabel={avg\_dev},
				ylabel style={yshift=-10pt},
				ytick={0,0.2,0.4,0.6,0.8,1.0,1.2,1.4,1.6,1.8},
				legend pos=north west,
				ymajorgrids=true,
				grid style=dashed,
				width=170mm,
				height=60mm,
				xticklabel style={rotate=35},
				ymin=0, ymax=1.8,
				/tikz/font=\scriptsize,]
				
				\addplot[ycomb, black, thick] coordinates {
					(2.5,0)
					(2.5,1.8)
				};
				\addplot[ycomb, black, thick] coordinates {
					(4.5,0)
					(4.5,1.8)
				};
				\addplot[ycomb, black, thick] coordinates {
					(7.5,0)
					(7.5,1.8)
				};
				\addplot[color=red,mark=square*,mark size=2pt] table [x=mut, y=value] {MUTATIONLD.txt};\label{p02}
				\addplot[color=red,mark=square*,mark size=2pt] table [x=ip, y=value] {IPLD.txt};
				\addplot[color=red,mark=square*,mark size=2pt] table [x=cross, y=value] {CROSSOVERLD.txt};
				\addplot[color=red,mark=square*,mark size=2pt] table [x=makespan, y=value] {MFTLD.txt};
				
				\addplot[color=blue,mark=triangle*,mark size=2pt] table [x=mut, y=value] {MUTATIONMD.txt};\label{p05}
				\addplot[color=blue,mark=triangle*,mark size=2pt] table [x=ip, y=value] {IPMD.txt};
				\addplot[color=blue,mark=triangle*,mark size=2pt] table [x=cross, y=value] {CROSSOVERMD.txt};
				\addplot[color=blue,mark=triangle*,mark size=2pt] table [x=makespan, y=value] {MFTMD.txt};
				
				\addplot[color=brown,mark=*,mark size=2pt] table [x=mut, y=value] {MUTATIONHD.txt};\label{p08}
				\addplot[color=brown,mark=*,mark size=2pt] table [x=ip, y=value] {IPHD.txt};
				\addplot[color=brown,mark=*,mark size=2pt] table [x=cross, y=value] {CROSSOVERHD.txt};
				\addplot[color=brown,mark=*,mark size=2pt] table [x=makespan, y=value] {MFTHD.txt};
				
				\node [draw,fill=white] at (rel axis cs: 0.928,0.572) {\shortstack[l]{
						\ref{p02} $p=0.2$\\
						\ref{p05} $p=0.5$ \\
						\ref{p08} $p=0.8$}};
				
			\end{axis}
		\end{tikzpicture}
		\caption{\label{APDplots}Effects of the variation of the four GA's components on the average deviation from the best lower bound (avg\_dev).}
	\end{center}
\end{figure}
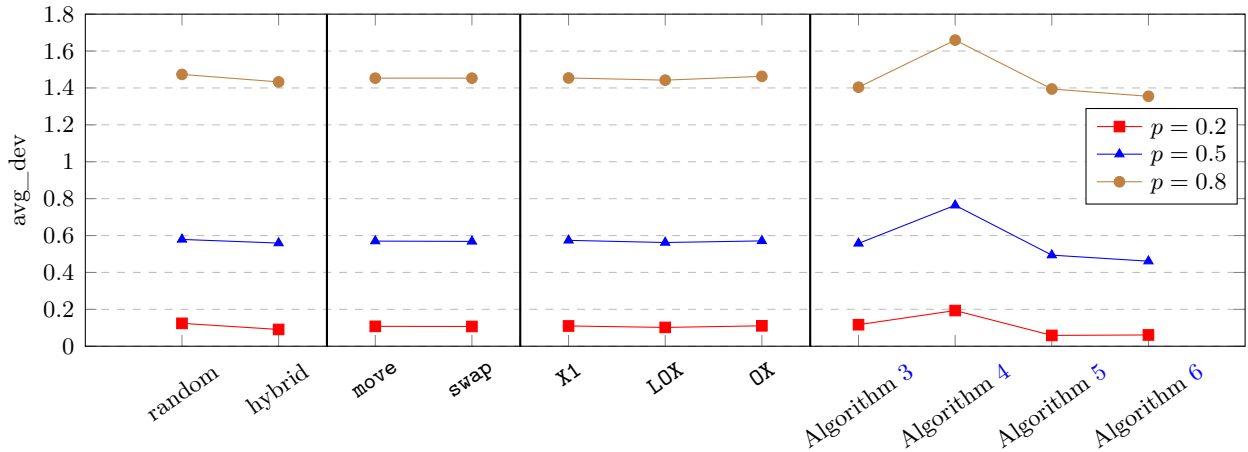

\begin{table}[htbp]
\scriptsize
\setlength{\tabcolsep}{0pt}
\caption{\label{results:GAKLD} Results of \texttt{GA} on the instances derived from~\citet{kowalczyk2018branch} for $p=0.2$.}
\begin{center}
\begin{threeparttable}
	\begin{tabular*}{\textwidth}{@{\extracolsep{\fill}}lrrrrrrrrrrrrrrrr} 
		\toprule
		&&&	&\multicolumn{5}{c}{Best}&\multicolumn{8}{c}{Average}\\
		\cline{5-9}\cline{10-17}
		$n$&$m$&$a$&$b$&$c$&$d$&$e$&$f$&$g$&$h$&$d$&$e$&$f$&$g$&$h$&$i$&$j$\\
		\midrule
		20&3&96.833&0.00020&95.181 & 0.000 & 5.263 & 0.003 & 0.004 & 63.511 & 0.000 & 5.263 & 0.003 & 0.004 & 252.343 & 22373.336 & 1.678 \\
		&5&87.833&0.00102&83.491 & 0.001 & 0.000 & 0.008 & 0.001 & 50.285 & 0.001 & 0.000 & 0.008 & 0.000 & 238.693 & 29872.893 & 2.225\\
		&8&99.833&0.00001&55.426 & 0.008 & 0.000 & 0.005 & -0.001 & 55.092 & 0.008 & 0.000 & 0.005 & -0.001 & 222.168 & 22674.260 & 1.832\\
		&10&100.000&0.00000&51.833 & 0.010 & - & - & - & 51.833 & 0.010 & - & - & - & 174.439 & 24086.947 & 1.857\\
		&12&100.000&0.00000&50.667 & 0.011 & - & - & - & 50.667 & 0.011 & - & - & - & 177.718 & 24680.221 & 1.956\\
		\cmidrule{1-17}
		50&3&1.000&0.00170& 83.333 & 0.000 & 9.596 & 0.001 & 0.001 & 16.667 & 0.000 & 2.020 & 0.001 & 0.001 & 300.000 & 80919.424 & 6.418\\
		&5&0.000&0.01404&- & - & 0.000 & 0.006 & 0.008 & - & - & 0.000 & 0.007 & 0.008 & 300.000 & 106537.247 & 9.801\\
		&8&0.000&0.06668&- & - & 0.000 & 0.038 & 0.032 & - & - & 0.000 & 0.041 & 0.029 & 289.922 & 127824.934 & 13.077\\
		&10&0.000&0.13648&- & - & 0.000 & 0.107 & 0.049 & - & - & 0.000 & 0.111 & 0.045 & 288.449 & 105428.282 & 11.662\\
		&12&0.000&0.20223&- & - & 0.000 & 0.197 & 0.052 & - & - & 0.000 & 0.199 & 0.050 & 291.894 & 74314.320 & 8.815\\
		\cmidrule{1-17}
		100&3& 0.333&0.00060&100.000 & 0.000 & 25.920 & 0.000 & 0.000 & 0.000 & 0.000 & 5.351 & 0.000 & 0.000 & 300.000 & 91423.264 & 11.270\\
		&5&0.000&0.00597&- & - & 0.000 & 0.002 & 0.004 & - & - & 0.000 & 0.002 & 0.004 & 300.000 & 122192.365 & 19.825\\
		&8&0.000&0.03580&- & - & 0.000 & 0.013 & 0.023 & - & - & 0.000 & 0.015 & 0.022 & 300.000 & 158820.920 & 31.690\\
		&10&0.000&0.08274&- & - & 0.000 & 0.038 & 0.049 & - & - & 0.000 & 0.043 & 0.045 & 300.000 & 177629.205 & 38.550\\
		&12&0.000&0.14041&- & - & 0.000 & 0.088 & 0.069 & - & - & 0.000 & 0.095 & 0.063 & 300.000 & 171645.632 & 43.204\\
		\cmidrule{1-17}
		150&3& 0.667&0.00038&100.000 & 0.000 & 29.195 & 0.000 & 0.000 & 0.000 & 0.000 & 4.195 & 0.000 & 0.000 & 300.000 & 95094.421 & 17.678\\
		&5&0.000&0.00396&- & - & 0.000 & 0.001 & 0.003 & - & - & 0.000 & 0.001 & 0.003 & 300.000 & 129802.777 & 35.402\\
		&8&0.000&0.02599&- & - & 0.000 & 0.007 & 0.019 & - & - & 0.000 & 0.008 & 0.019 & 300.000 & 159944.297 & 56.471\\
		&10&0.000&0.05464&- & - & 0.000 & 0.021 & 0.035 & - & - & 0.000 & 0.023 & 0.033 & 300.000 & 186285.000 & 76.849\\
		&12&0.000&0.10020&- & - & 0.000 & 0.053 & 0.055 & - & - & 0.000 & 0.058 & 0.050 & 300.000 & 211182.036 & 94.000\\
		\bottomrule
	\end{tabular*}
\begin{tablenotes}
	\item $a$(\%): Number of times in percentage the best lower bound coincides with the best upper bound.
	\item $b$: Average gap between the best lower and upper bounds.
	\item $c$(\%): Number of times in percentage the mean flow time coincides with the proven optimal mean flow time.
	\item $d$: Average deviation from the proven optimal mean flow time.
	\item $e$(\%): Number of times in percentage the mean flow time coincides with the best lower bound.
	\item $f$: Average deviation from the best lower bound.  $g$: Average deviation from the best upper bound.
	\item $h$: Average population size. $i$: Average number of generations. $j$(s): CPU time in seconds.
\end{tablenotes}
\end{threeparttable}
\end{center}
\end{table}
\begin{table}[htbp]
\scriptsize
\setlength{\tabcolsep}{0pt}
\caption{\label{results:GAKMD} Results of \texttt{GA} on the instances derived from~\citet{kowalczyk2018branch} for $p=0.5$.}
\begin{center}
\begin{threeparttable}
	\begin{tabular*}{\textwidth}{@{\extracolsep{\fill}}lrrrrrrrrrrrrrrrr} 
		\toprule
		&&&	&\multicolumn{5}{c}{Best}&\multicolumn{8}{c}{Average}\\
		\cline{5-9}\cline{10-17}
		$n$&$m$&$a$&$b$&$c$&$d$&$e$&$f$&$g$&$h$&$d$&$e$&$f$&$g$&$h$&$i$&$j$\\
		\midrule
		20&3& 31.333& 0.01793&61.170 & 0.005 & 0.000 & 0.023 & 0.005 & 31.383 & 0.006 & 0.000 & 0.023 & 0.004 & 333.491 & 59155.412 & 4.010\\
		&5&97.500&0.00201&19.658 & 0.019 & 0.000 & 0.098 & -0.009 & 19.658 & 0.019 & 0.000 & 0.098 & -0.009 & 341.125 & 41028.640 & 3.051\\
		&8&99.333&0.00046&17.953 & 0.023 & 0.000 & 0.084 & -0.010 & 17.953 & 0.023 & 0.000 & 0.084 & -0.010 & 338.936 & 41614.803 & 3.073\\
		&10&98.833&0.00092&23.946 & 0.020 & 0.000 & 0.095 & -0.009 & 23.946 & 0.020 & 0.000 & 0.095 & -0.009 & 337.567 & 38793.002 & 3.042\\
		&12&99.667&0.00017&19.231 & 0.021 & 0.000 & 0.075 & -0.021 & 19.231 & 0.021 & 0.000 & 0.075 & -0.021 & 338.758 & 41034.401 & 3.147\\
		\cmidrule{1-17}
		50&3&0.000&0.02605&- & - & 0.000 & 0.009 & 0.018 & - & - & 0.000 & 0.010 & 0.017 & 400.000 & 119599.663 & 10.157 \\
		&5&0.000&0.17006&- & - & 0.000 & 0.124 & 0.073 & - & - & 0.000 & 0.130 & 0.068 & 399.932 & 110761.113 & 10.349\\
		&8&0.000&0.41176&- & - & 0.000 & 0.571 & 0.087 & - & - & 0.000 & 0.577 & 0.083 & 400.000 & 79231.362 & 8.303\\
		&10&0.000&0.50469&- & - & 0.000 & 0.886 & 0.076 & - & - & 0.000 & 0.893 & 0.072 & 400.000 & 79081.384 & 8.972\\
		&12&0.000&0.52941&- & - & 0.000 & 1.003 & 0.072 & - & - & 0.000 & 1.011 & 0.069 & 399.988 & 78935.082 & 9.358\\
		\cmidrule{1-17}
		100&3& 0.000&0.01213&- & - & 0.000 & 0.003 & 0.009 & - & - & 0.000 & 0.003 & 0.009 & 400.000 & 135137.797 & 20.14\\
		&5&0.000&0.10929&- & - & 0.000 & 0.053 & 0.065 & - & - & 0.000 & 0.058 & 0.060 & 400.000 & 161069.953 & 27.436\\
		&8&0.000&0.36109&-& - & 0.000 & 0.395 & 0.121 & - & - & 0.000 & 0.413 & 0.110 & 400.000 & 111730.283 & 24.329\\
		&10&0.000&0.47603&- & - & 0.000 & 0.700 & 0.121 & - & - & 0.000 & 0.722 & 0.109 & 400.000 & 111199.424 & 26.192\\
		&12&0.000&0.55148&- & - & 0.000 & 0.990 & 0.119 & - & - & 0.000 & 1.015 & 0.108 & 400.000 & 109012.350 & 28.691\\
		\cmidrule{1-17}
		150&3&0.000&0.00763&- & - & 0.000 & 0.001 & 0.006 & - & - & 0.000 & 0.002 & 0.006 & 400.000 & 141651.162 & 35.965 \\
		&5&0.000&0.07683&- & - & 0.000 & 0.031 & 0.049 & - & - & 0.000 & 0.034 & 0.046 & 400.000 & 167429.682 & 49.340\\
		&8&0.000&0.31638&- & - & 0.000 & 0.317 & 0.112 & - & - & 0.000 & 0.333 & 0.101 & 400.000 & 133282.724 & 54.180\\
		&10&0.000&0.44158&- & - & 0.000 & 0.604 & 0.117 & - & - & 0.000 & 0.625 & 0.105 & 400.000 & 128539.607 & 56.873\\
		&12&0.000&0.52686&- & -& 0.000 & 0.895 & 0.116 & - & - & 0.000 & 0.921 & 0.103 & 400.000 & 129733.531 & 65.041\\
		\bottomrule
	\end{tabular*}
\begin{tablenotes}
	\item $a$(\%): Number of times in percentage the best lower bound coincides with the best upper bound.
	\item $b$: Average gap between the best lower and upper bounds.
	\item $c$(\%): Number of times in percentage the mean flow time coincides with the proven optimal mean flow time.
	\item $d$: Average deviation from the proven optimal mean flow time.
	\item $e$(\%): Number of times in percentage the mean flow time coincides with the best lower bound.
	\item $f$: Average deviation from the best lower bound.  $g$: Average deviation from the best upper bound.
	\item $h$: Average population size. $i$: Average number of generations. $j$(s): CPU time in seconds.
\end{tablenotes}
\end{threeparttable}
\end{center}
\end{table}
\begin{table}[htbp]
\scriptsize
\setlength{\tabcolsep}{0pt}
\caption{\label{results:GAKHD} Results of \texttt{GA} on the instances derived from~\citet{kowalczyk2018branch} for $p=0.8$.}
\begin{center}
\begin{threeparttable}
	\begin{tabular*}{\textwidth}{@{\extracolsep{\fill}}lrrrrrrrrrrrrrrrr} 
		\toprule
		&&&	&\multicolumn{5}{c}{Best}&\multicolumn{8}{c}{Average}\\
		\cline{5-9}\cline{10-17}
		$n$&$m$&$a$&$b$&$c$&$d$&$e$&$f$&$g$&$h$&$d$&$e$&$f$&$g$&$h$&$i$&$j$\\
		\midrule
		20&3&9.167&0.13951&25.455 & 0.028 & 0.000 & 0.191 & 0.001 & 23.636 & 0.029 & 0.000 & 0.191 & 0.001 & 585.729 & 54433.266 & 3.904 \\
		&5&10.333&0.13157&22.581 & 0.026 & 0.000 & 0.189 & -0.008 & 22.581 & 0.026 & 0.000 & 0.189 & -0.008 & 590.768 & 53566.999 & 3.856\\
		&8&9.000&0.12903&25.926 & 0.031 & 0.000 & 0.183 & -0.009 & 22.222 & 0.031 & 0.000 & 0.183 & -0.009 & 588.574 & 54221.637 & 4.110\\
		&10&8.500&0.12808&15.686 & 0.028 & 0.000 & 0.180 & -0.008 & 15.686 & 0.029 & 0.000 & 0.180 & -0.008 & 587.127 & 54162.881 & 4.062\\
		&12&7.000&0.13324&14.286 & 0.033 & 0.000 & 0.186 & -0.010 & 9.524 & 0.033 & 0.000 & 0.186 & -0.010 & 588.654 & 55252.539 & 4.290\\
		\cmidrule{1-17}
		50&3& 0.000&0.19688&- & - & 0.000 & 0.142 & 0.090 & - & - & 0.000 & 0.148 & 0.085 & 699.962 & 136267.501 & 10.972\\
		&5&0.000&0.47828&- & - & 0.000 & 0.737 & 0.104 & - & - & 0.000 & 0.747 & 0.098 & 699.982 & 127000.040 & 11.382\\
		&8&0.000&0.60653&- & - & 0.000 & 1.404 & 0.089 & - & - & 0.000 & 1.418 & 0.084 & 700.000 & 127571.326 & 12.48\\
		&10&0.000&0.62761&- & - & 0.000 & 1.622 & 0.087 & - & - & 0.000 & 1.638 & 0.082 & 699.993 & 126194.969 & 13.048\\
		&12&0.000&0.62803&- & - & 0.000 & 1.647 & 0.084 & - & - & 0.000 & 1.663 & 0.078 & 699.969 & 127737.082 & 13.980\\
		\cmidrule{1-17}
		100&3& 0.000&0.13407&- & - & 0.000 & 0.070 & 0.077 & - & - & 0.000 & 0.077 & 0.071 & 700.000 & 189374.741 & 22.089\\
		&5&0.000&0.43624&- & - & 0.000 & 0.552 & 0.135 & - & - & 0.000 & 0.571 & 0.124 & 700.000 & 183164.300 & 26.765\\
		&8&0.000&0.63228&- & - & 0.000 & 1.391 & 0.130 & - & - & 0.000 & 1.422 & 0.118 & 700.000 & 182100.208 & 31.775\\
		&10&0.000&0.69893&- & - & 0.000 & 1.908 & 0.133 & - & - & 0.000 & 1.946 & 0.122 & 700.000 & 183118.300 & 36.365\\
		&12&0.000&0.72655&- & - & 0.000 & 2.268 & 0.132 & - & - & 0.000 & 2.309 & 0.120 & 700.000 & 181676.763 & 39.240\\
		\cmidrule{1-17}
		150&3& 0.000&0.09737&- & - & 0.000 & 0.043 & 0.060 & - & - & 0.000 & 0.047 & 0.057 & 700.000 & 203111.000 & 35.85\\
		&5&0.000&0.40112&- & - & 0.000 & 0.471 & 0.129 & - & - & 0.000 & 0.491 & 0.117 & 700.000 & 222217.662 & 52.572\\
		&8&0.000&0.61605&- & - & 0.000 & 1.293 & 0.129 & - & - & 0.000 & 1.324 & 0.117 & 700.000 & 223761.672 & 68.126\\
		&10&0.000&0.68736&- & - & 0.000 & 1.820 & 0.127 & - & - & 0.000 & 1.857 & 0.116 & 700.000 & 223288.015 & 78.870\\
		&12&0.000&0.73511&- & - & 0.000 & 2.328 & 0.127 & - & - & 0.000 & 2.374 & 0.115 & 700.000 & 220933.523 & 89.178\\
		\bottomrule
	\end{tabular*}
	\begin{tablenotes}
	\item $a$(\%): Number of times in percentage the best lower bound coincides with the best upper bound.
	\item $b$: Average gap between the best lower and upper bounds.
	\item $c$(\%): Number of times in percentage the mean flow time coincides with the proven optimal mean flow time.
	\item $d$: Average deviation from the proven optimal mean flow time.
	\item $e$(\%): Number of times in percentage the mean flow time coincides with the best lower bound.
	\item $f$: Average deviation from the best lower bound.  $g$: Average deviation from the best upper bound.
	\item $h$: Average population size. $i$: Average number of generations. $j$(s): CPU time in seconds.
\end{tablenotes}
\end{threeparttable}
\end{center}
\end{table} 
\subsection{Setting of the GA}
We first analyze the impact of the four main components of the GA on its performance: (a) the initial population, either randomly generated or seeded with some specific job permutations; (b) the chromosome evaluation algorithm, using Algorithms~\ref{evaluation:fifo},~\ref{evaluation:GT}, and~\ref{evaluation:ectf} for active schedules, and Algorithm~\ref{evaluation:ND} for non-delay schedules; (c) the crossover type (\texttt{X1}, \texttt{OX}, or \texttt{LOX}); and (d) the mutation type (\texttt{swap} or \texttt{move}). We tested all 48 variants on instance groups \((n,m) \in \{(20,3), (50,5), (100,8), (150,10)\}\), each with the three values of \( p \). Each variant was run five times per instance, and the average mean flow time was recorded. The parameters on these experiments were set as follows: \( N_p = 50 \), \( I_{\max} = 100 \times N_p \times n \), \( I_{\text{noimp}} = 50000 \), \( p_m = 0.2 \), and \( T_{\max} = 1000 \).

Figure~\ref{APDplots} illustrates the effects of varying the four components on the average percentage deviation from \(BestLB\) across the three values of \(p\). The deviation is calculated as \(((\mathit{mft} - BestLB) / BestLB) \), where \(\mathit{mft}\) is the mean flow time obtained by the corresponding variant of the GA. In the figure, "random" refers to an initial population generated randomly, while "hybrid" denotes a population seeded with specific permutations.

We observed from the implementation (see Figure~\ref{APDplots}) that seeding the initial population with specific permutations consistently led to better performance across all instance groups. Regarding the crossover operators, variants using LOX outperformed those using X1 and OX in all cases. For mutation operators, the swap mutation achieved the best results for instances with \(p = 0.2\) and \(p = 0.5\), while the move mutation performed slightly better on instances with \(p = 0.8\). Among the four components, the algorithm used to evaluate the chromosomes had the most significant impact on the GA's performance. Algorithms~\ref{evaluation:fifo} and~\ref{evaluation:GT} were consistently outperformed by Algorithms~\ref{evaluation:ectf} and~\ref{evaluation:ND} across all instance groups. Between the latter two, Algorithm~\ref{evaluation:ectf} showed slightly better performance for \(p = 0.2\), whereas Algorithm~\ref{evaluation:ND} outperformed it for \(p = 0.5\) and \(p = 0.8\).

Regarding the GA parameters, we experimented with different values for the population size \(N_p\) (50, 100, 150, 200, 300, 400, 500, 600) and two values for the mutation rate \(p_m\) (0.2 and 1). All combinations were tested on the instance groups \((n,m) \in \{(20,3), (50,5), (100,8), (150,10)\}\). These 16 combinations of \(N_p\) and \(p_m\) were evaluated using a GA configured with Algorithm~\ref{evaluation:ND} for chromosome evaluation, an initial population seeded with specific permutations, LOX crossover, and the swap mutation operator. We observed that systematic mutation (\(p_m = 1\)) outperformed the probabilistic mutation (\(p_m = 0.2\)) in the majority of instance groups. As for population size, increasing \(N_p\) generally led to higher CPU time—as expected—but did not necessarily improve performance. Specifically, the best average performance was observed for \(N_p = 300\) when \(p = 0.2\), for \(N_p = 400\) when \(p = 0.5\), and for \(N_p = 700\) when \(p = 0.8\).

We present in Tables~\ref{results:GAKLD},~\ref{results:GAKMD}, and~\ref{results:GAKHD} the results corresponding to the GA variant configured with Algorithm~\ref{evaluation:ND} for chromosome evaluation, an initial population seeded with specific permutations, LOX crossover, the swap mutation operator, and a systematic mutation strategy. The population size \(N_p\) was set according to the value of \(p\): 300 for \(p = 0.2\), 400 for \(p = 0.5\), and 700 for \(p = 0.8\). The remaining parameters are the same as those used in the previous experiments: \( I_{\max} = 100 \times N_p \times n \), \( I_{\text{noimp}} = 50000 \), and \( T_{\max} = 1000 \). A dash (“--”) in the tables indicates that no value is reported for the corresponding criterion. This occurs either when no instances have a known optimal mean flow time or when all instances in the group are solved to proven optimality, in which case comparison with lower and upper bounds is unnecessary. 

From Tables~\ref{results:GAKLD},~\ref{results:GAKMD}, and~\ref{results:GAKHD}, we observe that as the density of the conflict graph increases, the number of instances solved to proven optimality decreases, and the deviation from the best lower bound increases. We attribute this to the quality of the lower bounds: as \(p\) increases, the best-performing lower bound becomes \(LB_1\), which does not account for conflicts. Overall, for \(p = 0.2\) (respectively, \(p = 0.5\) and \(p = 0.8\)), the GA variant solved to optimality 1951 out of 2919 (594 out of 2560, and 56 out of 264) instances for which the best lower and upper bounds coincide, with an average deviation from the optimal mean flow time of 0.006 (respectively, 0.020 and 0.029). On instances where the best lower and upper bounds differ, the GA solution coincided with the best lower bound in 387 out of 9081 cases for \(p = 0.2\). The average deviation from the best lower bound on these instances is 0.038 for \(p = 0.2\), 0.420 for \(p = 0.5\), and 0.948 for \(p = 0.8\). Meanwhile, the average deviation from the best upper bound is 0.020, 0.056, and 0.080 for \(p = 0.2\), \(p = 0.5\), and \(p = 0.8\), respectively, indicating that the GA improved upon the solutions obtained by the mathematical models within the given time limit of 15 minutes. In contrast, the GA requires at most 90 seconds on average for the largest instances. However, in some instances with \(n = 20\), the deviation from the best upper bound is negative, suggesting that the mean flow times returned by the mathematical models—enhanced with warm starts—were better in those specific cases.

\begin{table}[htbp]
\scriptsize
\setlength{\tabcolsep}{0pt}
\caption{\label{results:GAKLDLS} Performance of the \texttt{GA} with local search on instances derived from~\citet{kowalczyk2018branch} for \(p = 0.2\)}
\begin{center}
\begin{threeparttable}
	\begin{tabular*}{\textwidth}{@{\extracolsep{\fill}}lrrrrrrrrrrrrrr} 
		\toprule
		&&\multicolumn{5}{c}{Best}&\multicolumn{8}{c}{Average}\\
		\cline{3-7}\cline{8-15}
		$n$&$m$&$a$&$b$&$c$&$d$&$e$&$a$&$b$&$c$&$d$&$e$&$f$&$g$&$h$\\
		\midrule
		20&3& 97.762 & 0.000 & 5.263 & 0.003 & 0.004 & 85.370 & 0.000 & 5.263 & 0.003 & 0.004 & 252.432 & 23316.523 & 1.829\\
		&5&86.717 & 0.001 & 0.000 & 0.008 & 0.001 & 76.660 & 0.001 & 0.000 & 0.008 & 0.001 & 238.685 & 33408.343 & 3.531\\
		&8&57.262 & 0.007 & 0.000 & 0.005 & -0.001 & 57.262 & 0.007 & 0.000 & 0.005 & -0.001 & 222.143 & 22596.445 & 3.711\\
		&10&56.333 & 0.008 & - & - & - & 56.333 & 0.008 & - & - & - & 174.447 & 24087.015 & 3.836\\
		&12&55.000 & 0.008 & - & - & - & 55.000 & 0.008 & - & - & - & 177.741 & 24675.405 & 4.155\\
		\cmidrule{1-15}
		50&3&100.000 & 0.000 & 19.192 & 0.001 & 0.001 & 100.000 & 0.000 & 18.519 & 0.001 & 0.001 & 300.000 & 82840.136 & 18.009\\
		&5&- & - & 0.000 & 0.005 & 0.009 & - & - & 0.000 & 0.006 & 0.008 & 300.000 & 106442.435 & 31.784\\
		&8&- & - & 0.000 & 0.037 & 0.033 & - & - & 0.000 & 0.039 & 0.031 & 289.859 & 127396.160 & 43.065\\
		&10&- & - & 0.000 & 0.106 & 0.050 & - & - & 0.000 & 0.109 & 0.047 & 288.419 & 108105.759 & 46.029\\
		&12&- & - & 0.000 & 0.196 & 0.053 & - & - & 0.000 & 0.197 & 0.052 & 291.729 & 73938.985 & 50.417\\
		\cmidrule{1-15}
		100&3&100.000 & 0.000 & 58.863 & 0.000 & 0.000 & 100.000 & 0.000 & 58.528 & 0.000 & 0.000 & 300.000 & 92224.775 & 19.410 \\
		&5&- & - & 0.167 & 0.001 & 0.005 & - & - & 0.000 & 0.002 & 0.004 & 300.000 & 123240.121 & 53.886\\
		&8&- & - & 0.000 & 0.013 & 0.023 & - & - & 0.000 & 0.015 & 0.022 & 300.000 & 162292.403 & 80.344\\
		&10&- & - & 0.000 & 0.038 & 0.049 & - & - & 0.000 & 0.042 & 0.046 & 300.000 & 176199.181 & 97.305\\
		&12&- & - & 0.000 & 0.087 & 0.070 & - & - & 0.000 & 0.094 & 0.064 & 300.000 & 169424.882 & 108.400\\
		\cmidrule{1-15}
		150&3& 100.000 & 0.000 & 64.765 & 0.000 & 0.000 & 100.000 & 0.000 & 64.765 & 0.000 & 0.000 & 300.000 & 95956.479 & 34.282\\
		&5&- & - & 0.833 & 0.001 & 0.003 & - & - & 0.167 & 0.001 & 0.003 & 300.000 & 127248.643 & 107.105\\
		&8&- & - & 0.000 & 0.007 & 0.019 & - & -& 0.000 & 0.008 & 0.019 & 300.000 & 160233.114 & 159.087\\
		&10&- & - & 0.000 & 0.021 & 0.035 & - & - & 0.000 & 0.023 & 0.033 & 300.000 & 188731.137 & 203.235\\
		&12&- & - & 0.000 & 0.053 & 0.055 & - & -& 0.000 & 0.057 & 0.051 & 300.000 & 209185.658 & 240.062\\
		\bottomrule
	\end{tabular*}
	\begin{tablenotes}
	\item $a$(\%): Number of times in percentage the mean flow time coincides with the proven optimal mean flow time.
	\item $b$: Average deviation from the proven optimal mean flow time.
	\item $c$(\%): Number of times in percentage the mean flow time coincides with the best lower bound.
	\item $d$: Average deviation from the best lower bound. $e$: Average deviation from the best upper bound. 
	\item $f$: Average population size.  $g$: Average number of generations. $h$(s): CPU time in seconds. 
\end{tablenotes}
\end{threeparttable}
\end{center}
\end{table}
\begin{table}[htbp]
\scriptsize
\setlength{\tabcolsep}{0pt}
\caption{\label{results:GAKMDLS} Performance of the \texttt{GA} with local search on instances derived from~\citet{kowalczyk2018branch} for \(p = 0.5\).}
\begin{center}
\begin{threeparttable}
	\begin{tabular*}{\textwidth}{@{\extracolsep{\fill}}lrrrrrrrrrrrrrr} 
		\toprule
		&&\multicolumn{5}{c}{Best}&\multicolumn{8}{c}{Average}\\
		\cline{3-7}\cline{8-15}
		$n$&$m$&$a$&$b$&$c$&$d$&$e$&$a$&$b$&$c$&$d$&$e$&$f$&$g$&$h$\\
		\midrule
		20&3& 64.894 & 0.005 & 0.000 & 0.023 & 0.005 & 45.213 & 0.005 & 0.000 & 0.023 & 0.005 & 333.585 & 62686.589 & 8.246\\
		&5&21.026 & 0.017 & 0.000 & 0.098 & -0.009 & 21.026 & 0.017 & 0.000 & 0.098 & -0.009 & 341.210 & 41092.625 & 7.927\\
		&8&19.966 & 0.020 & 0.000 & 0.084 & -0.010 & 19.966 & 0.020 & 0.000 & 0.084 & -0.010 & 338.888 & 41615.499 & 9.229\\
		&10&25.970 & 0.017 & 0.000 & 0.095 & -0.009 & 25.970 & 0.017 & 0.000 & 0.095 & -0.009 & 337.634 & 38860.773 & 9.616\\
		&12&20.234 & 0.018 & 0.000 & 0.075 & -0.021 & 20.234 & 0.018 & 0.000 & 0.075 & -0.021 & 338.757 & 40962.831 & 11.076\\
		\cmidrule{1-15}
		50&3&- & - & 0.000 & 0.008 & 0.018 & - & - & 0.000 & 0.009 & 0.017 & 400.000 & 119423.449 & 34.145\\
		&5&- & -& 0.000 & 0.123 & 0.074 & - & - & 0.000 & 0.128 & 0.070 & 399.951 & 110470.201 & 44.157\\
		&8&- & - & 0.000 & 0.569 & 0.088 & - & - & 0.000 & 0.573 & 0.085 & 400.000 & 79463.136 & 53.287\\
		&10&- & - & 0.000 & 0.885 & 0.077 & - & - & 0.000 & 0.889 & 0.074 & 399.998 & 79525.211 & 62.310\\
		&12&- & - & 0.000 & 1.001 & 0.073 & - & - & 0.000 & 1.006 & 0.071 & 399.997 & 78749.375 & 68.468\\
		\cmidrule{1-15}
		100&3&- & - & 0.000 & 0.002 & 0.010 & - & - & 0.000 & 0.003 & 0.009 & 400.000 & 135463.810 & 61.310 \\
		&5&- & - & 0.000 & 0.052 & 0.065 & - & - & 0.000 & 0.057 & 0.061 & 400.000 & 159880.840 & 84.006\\
		&8&- & - & 0.000 & 0.394 & 0.122 & - & - & 0.000 & 0.410 & 0.112 & 400.000 & 111057.434 & 99.270\\
		&10&- & - & 0.000 & 0.699 & 0.121 & - & - & 0.000 & 0.719 & 0.111 & 400.000 & 110877.698 & 114.781\\
		&12&- & - & 0.000 & 0.989 & 0.119 & - & - & 0.000 & 1.012 & 0.109 & 400.000 & 109210.425 & 126.382\\
		\cmidrule{1-15}
		150&3&- & - & 0.000 & 0.001 & 0.006 & - & - & 0.000 & 0.001 & 0.006 & 400.000 & 143198.518 & 126.975 \\
		&5&- & - & 0.000 & 0.030 & 0.050 & - & - & 0.000 & 0.033 & 0.047 & 400.000 & 169113.605 & 177.625\\
		&8&- & - & 0.000 & 0.315 & 0.113 & - & - & 0.000 & 0.331 & 0.102 & 400.000 & 132754.046 & 218.803\\
		&10&- & - & 0.000 & 0.601 & 0.118 & - & - & 0.000 & 0.621 & 0.107 & 400.000 & 128896.970 & 254.423\\
		&12&- & - & 0.000 & 0.893 & 0.116 & - & - & 0.000 & 0.917 & 0.105 & 400.000 & 128730.603 & 283.858\\
		\bottomrule
	\end{tabular*}
	\begin{tablenotes}
	\item $a$(\%): Number of times in percentage the mean flow time coincides with the proven optimal mean flow time.
	\item $b$: Average deviation from the proven optimal mean flow time.
	\item $c$(\%): Number of times in percentage the mean flow time coincides with the best lower bound.
	\item $d$: Average deviation from the best lower bound. $e$: Average deviation from the best upper bound. 
	\item $f$: Average population size.  $g$: Average number of generations. $h$(s): CPU time in seconds. 
\end{tablenotes}
\end{threeparttable}
\end{center}
\end{table}
\begin{table}[htbp]
\scriptsize
\setlength{\tabcolsep}{0pt}
\caption{\label{results:GAKHDLS} Performance of the \texttt{GA} with local search on instances derived from~\citet{kowalczyk2018branch} for \(p = 0.8\).}
\begin{center}
\begin{threeparttable}
	\begin{tabular*}{\textwidth}{@{\extracolsep{\fill}}lrrrrrrrrrrrrrr} 
		\toprule
		&&\multicolumn{5}{c}{Best}&\multicolumn{8}{c}{Average}\\
		\cline{3-7}\cline{8-15}
		$n$&$m$&$a$&$b$&$c$&$d$&$e$&$a$&$b$&$c$&$d$&$e$&$f$&$g$&$h$\\
		\midrule
		20&3&29.091 & 0.020 & 0.000 & 0.189 & 0.002 & 29.091 & 0.020 & 0.000 & 0.189 & 0.002 & 585.663 & 55724.195 & 10.863 \\
		&5&25.806 & 0.021 & 0.000 & 0.188 & -0.007 & 25.806 & 0.021 & 0.000 & 0.188 & -0.007 & 590.926 & 55122.999 & 13.093\\
		&8&25.926 & 0.023 & 0.000 & 0.182 & -0.008 & 25.926 & 0.023 & 0.000 & 0.182 & -0.008 & 588.719 & 55967.616 & 16.051\\
		&10&15.686 & 0.024 & 0.000 & 0.178 & -0.007 & 15.686 & 0.024 & 0.000 & 0.178 & -0.007 & 587.224 & 55590.193 & 17.282\\
		&12&14.286 & 0.031 & 0.000 & 0.185 & -0.009 & 14.286 & 0.031 & 0.000 & 0.185 & -0.009 & 588.587 & 55896.885 & 19.901\\
		\cmidrule{1-15}
		50&3&- & - & 0.000 & 0.140 & 0.091 & - & - & 0.000 & 0.145 & 0.088 & 699.977 & 137830.150 & 43.296\\
		&5&- & - & 0.000 & 0.734 & 0.105 & - & - & 0.000 & 0.741 & 0.102 & 699.990 & 127299.503 & 59.737\\
		&8&- & - & 0.000 & 1.399 & 0.091 & - & - & 0.000 & 1.409 & 0.087 & 699.997 & 127448.213 & 78.546\\
		&10&- & - & 0.000 & 1.617 & 0.089 & - & - & 0.000 & 1.629 & 0.085 & 699.994 & 127580.719 & 90.450\\
		&12&- & - & 0.000 & 1.641 & 0.085 & - & - & 0.000 & 1.653 & 0.081 & 699.963 & 128138.655 & 101.674\\
		\cmidrule{1-15}
		100&3& - & - & 0.000 & 0.068 & 0.078 & - & - & 0.000 & 0.074 & 0.073 & 700.000 & 189204.558 & 70.207\\
		&5&- & - & 0.000 & 0.550 & 0.136 & - & - & 0.000 & 0.569 & 0.125 & 700.000 & 183525.818 & 98.384\\
		&8&- & - & 0.000 & 1.389 & 0.130 & - & - & 0.000 & 1.418 & 0.120 & 700.000 & 183952.115 & 134.416\\
		&10&- & - & 0.000 & 1.904 & 0.135 & - & - & 0.000 & 1.940 & 0.124 & 700.000 & 184563.452 & 157.335\\
		&12&- & - & 0.000 & 2.264 & 0.132 & - & - & 0.000 & 2.305 & 0.121 & 700.000 & 182842.820 & 178.043\\
		\cmidrule{1-15}
		150&3& - & - & 0.000 & 0.042 & 0.061 & - & - & 0.000 & 0.046 & 0.058 & 700.000 & 203057.953 & 143.939\\
		&5&- & - & 0.000 & 0.470 & 0.130 & - & -& 0.000 & 0.489 & 0.118 & 700.000 & 221995.590 & 217.429\\
		&8&- & - & 0.000 & 1.289 & 0.130 & - & - & 0.000 & 1.321 & 0.118 & 700.000 & 221177.960 & 294.746\\
		&10&- & - & 0.000 & 1.816 & 0.128 & - & - & 0.000 & 1.855 & 0.116 & 700.000 & 219190.410 & 342.727\\
		&12&- & - & 0.000 & 2.324 & 0.128 & - & - & 0.000 & 2.368 & 0.117 & 700.000 & 223297.915 & 397.041\\
		\bottomrule
	\end{tabular*}
	\begin{tablenotes}
	\item $a$(\%): Number of times in percentage the mean flow time coincides with the proven optimal mean flow time.
	\item $b$: Average deviation from the proven optimal mean flow time.
	\item $c$(\%): Number of times in percentage the mean flow time coincides with the best lower bound.
	\item $d$: Average deviation from the best lower bound. $e$: Average deviation from the best upper bound. 
	\item $f$: Average population size.  $g$: Average number of generations. $h$(s): CPU time in seconds. 
\end{tablenotes}
\end{threeparttable}
\end{center}
\end{table} 

\subsection{Performance of GA with Local Search}
In this section, we discuss the impact of applying a local search to the final population of the GA (we denote this variant GA-ls). The neighborhood operators used are those presented in Section~\ref{sec:LS}. We apply the local search to the GA variant selected in the previous section and vary the maximum number of local search iterations (200, 300, 400, 500, 700, and 1000) to assess its effect. Our experiments show that increasing the number of local search iterations generally improves the GA’s performance, albeit with an increase in CPU time. However, this increase in performance is not always significant. Based on these observations, we fix the number of iterations to 500 for instances with $n = 20$ and $n = 50$, and to 700 for instances with $n = 100$ and $n = 150$. The results are summarized in Tables~\ref{results:GAKLDLS},~\ref{results:GAKMDLS}, and~\ref{results:GAKHDLS}.

We observe performance improvements when comparing the GA with local search (GA-ls) to the GA without local search across most instances. Specifically, for instances where the best lower and upper bounds coincide, GA-ls solved to optimality 2{,}048 out of 2{,}919 cases for \( p = 0.2 \), 639 out of 2{,}560 for \( p = 0.5 \), and 60 out of 264 for \( p = 0.8 \). The corresponding average deviations from the optimal mean flow time are 0.005, 0.017, and 0.023, respectively. For instances where the best lower and upper bounds differ, the GA-ls solution matched the best known lower bound in 859 out of 9{,}081 cases for \( p = 0.2 \). The average deviation from the best lower bound in these cases is 0.037 for \( p = 0.2 \), 0.418 for \( p = 0.5 \), and 0.945 for \( p = 0.8 \). Correspondingly, the average deviation from the best upper bound is 0.020, 0.056, and 0.081, respectively. In total, GA-ls solved to optimality 2{,}907 instances for \( p = 0.2 \), 639 for \( p = 0.5 \), and 60 for \( p = 0.8 \) out of 12{,}000 instances. The overall average deviation from the best lower bounds across all instances is 0.029 for \( p = 0.2 \), 0.333 for \( p = 0.5 \), and 0.925 for \( p = 0.8 \).

\section{Conclusion}
This paper addressed the problem of scheduling jobs on identical parallel machines under conflict constraints, where certain jobs cannot be executed simultaneously on different machines. These constraints are given by a simple undirected conflict graph. The objective is to minimize the mean flow time. We showed that the problem is NP-hard even with only two machines and when job processing times are restricted to just two distinct values. However, for unit-time jobs, we demonstrated that the problem is polynomially solvable on two machines with arbitrary conflict graphs, but becomes NP-hard when the number of machines increases to three. Still considering unit-time jobs, we further showed that the problem remains polynomially solvable on three machines if the conflict graph is the complement of a bipartite graph. Also, when the conflict graph is the complement of a star, the problem is tractable.

To solve the general problem on $m$ machines with arbitrary conflict graphs, we proposed three mathematical formulations and a GA. We evaluated their performance on a wide range of benchmark instances derived from those of \citet{kowalczyk2018branch}, augmented with conflict graphs of varying densities. We compared the performance of the mathematical models and the GA against lower bounds on the mean flow time. The computational results show that the precedence-based model, when enhanced with a warm start, outperformed the time-indexed formulation. This model solved 12.542\% of the instances to proven optimality within a 15-minute time limit, with an average optimality gap reported by Gurobi of 0.544. However, comparing the best upper bounds returned by the formulations to the best lower bounds revealed that at least 15.953\% of the instances are in fact optimally solved, and the average gap between these bounds was reduced to 0.230. For the GA, we conducted extensive experiments to tune its components and parameters. The best-performing variant, combined with local search, outperformed the mathematical models with warm starts while requiring less CPU time. The average deviation of the GA’s solutions from the best upper bounds obtained by the mathematical formulations was 0.053, indicating that the GA produced better solutions.

Possible extensions of this work include further investigating the complexity of the problem for specific classes of conflict graphs—for example, for two machines with conflict graphs that are complements of bipartite graphs—motivated by the fact that the problem with unit-time jobs is polynomially solvable on three machines for complements of bipartite graphs. From a methodological perspective, it would be interesting to design more efficient lower bounds to better assess algorithmic performance, particularly for large-scale instances with high conflict densities. Additionally, developing exact algorithms based on decomposition techniques or arc-flow formulations—known for their effectiveness in related scheduling problems on identical machines with mean flow time minimization—could further enhance solution quality.
	
	\bibliographystyle{unsrtnat}
	\bibliography{biblio}
	

\end{document}